\journal{Signal Processing}
\newcommand{\tens}[1]{\boldsymbol{\mathcal{#1}}}
\newcommand{\matr}[1]{\boldsymbol{#1}}
\newcommand{\vect}[1]{\boldsymbol{#1}}
\renewcommand{\vec}{\mathop{\operator@font vec}}   %
\newcommand{\vecT}{\mathop{\operator@font vecT}}   %
\newcommand{\Unvec}{\mathop{\operator@font Unvec}}   %
\newcommand{\Span}{\mathop{\operator@font span}}
\newcommand{\ZT}{\mathcal{Z}}
\newcommand{\IZT}{\mathcal{Z}^{-1}}
\newcommand{\FT}{\mathcal{F}}
\newcommand{\IFT}{\mathcal{F}^{-1}}
\newcommand{\HT}{\mathcal{H}}
\newcommand{\IHT}{\mathcal{H}^{-1}}
\newcommand{\cep}[1]{\hat{#1}}
\newcommand{\hel}[1]{{#1}^{(1)}}
\theoremstyle{plain}
\theoremstyle{definition}
\newtheorem{defi}{Definition} 
\newtheorem{theo}{Theorem} 
\newtheorem{propo}{Proposition} 
\newtheorem{corol}{Corollary}
\definecolor{darkred}{rgb}{0.7,0,0}
\definecolor{darkgreen}{rgb}{0,0.46,0}
    \def\ps@pprintTitle{%
      \let\@oddhead\@empty
      \let\@evenhead\@empty
      \let\@oddfoot\@empty
      \let\@evenfoot\@oddfoot
    }
\begin{document}\sloppy
\begin{frontmatter}
%
\title{Multidimensional factorization \\ through helical mapping}

\author{Francesca Raimondi\fnref{gipsa}}
\ead{francesca.raimondi@gipsa-lab.grenoble-inp.fr}
\author{Pierre Comon\fnref{gipsa}}
\author{Olivier Michel\fnref{gipsa}}
\author{Umberto Spagnolini\fnref{polimi}}

\fntext[gipsa]{GIPSA-Lab (Department of Image and Signal-processing), CNRS, Universit\'e Grenoble Alpes, 38400 Saint Martin d'H\`eres, France}
\fntext[polimi]{Dipartimento di Elettronica Informazione e Bioingegneria, Politecnico di Milano, I-20133 Milano, Italy}

%

\begin{abstract}
This paper proposes a new perspective on the problem of multidimensional spectral factorization, through helical mapping: $d$-dimensional ($d$D) data arrays are vectorized, processed by $1$D cepstral analysis and then remapped onto the original space. Partial differential equations (PDEs) are the basic framework to describe the evolution of physical phenomena. We observe that the minimum phase helical solution asymptotically converges to the $d$D semi-causal solution, and allows to decouple the two solutions arising from PDEs describing physical systems. We prove this equivalence in the theoretical framework of cepstral analysis, and we also illustrate the validity of helical factorization through a $2$D wave propagation example and a $3$D application to helioseismology.
\end{abstract}

\begin{keyword}
Multidimensional Filtering, Cepstral Analysis, Spectral Factorization, Blind Deconvolution, Minimum Phase, Causality
\end{keyword}

%
\end{frontmatter}

\section{Introduction}

Wavefield processing has been applied to several fields of physical sciences. 
Inverse problems include the estimation of the impulse response of a physical system, such as the earth response to an ideal impulse-like seismic source. 
On the other hand, seismic migration consists in inferring the signal that would be measured at any depth, starting from data recorded on the surface \cite{Gari79:ieeetsp}. More generally, multidimensional digital filters are extensively used in remote sensing, image processing, medical imaging and geophysics. In array processing, multidimensional filters have been used to separate seismic waves based on their polarization or propagation velocity differences \cite{DonnNS08:ieeetsp}.

Physical filters are mostly ruled by partial differential equations (PDEs) that can be studied, in some cases, as linear operators through Fourier Transform. 
Wavefield propagation through an homogeneous medium can be represented by a linear PDE with constant coefficients, and hence the inverse problem is reduced to a $d$D deconvolution (and then, to multidimensional linear filtering).  If we assume a minimum phase impulse response, and if the source is white in all its dimensions, blind deconvolution is equivalent to spectral factorization \cite{RickC00:SP}, which can be tackled through homomorphic deconvolution \cite{TaxtF99:ieeetuffc, taxtS01:ieeetuffc}. This approach has been extended to parametric autoregressive processes through linear prediction (\textit{predictive deconvolution}) \cite{TakaNFDSLRT12:ieeespm}, and cepstral analysis \cite{SarpC02:ieeeicdsp, Kizi07:ieeetcs}. 

In particular, spectral factorization consists in separating causal and anti-causal components in physical systems through decomposition of spectral density functions as the product of minimum phase and maximum phase terms \cite{DudgM84:PHSPS}. 
Cepstrum analysis eases the design of causal filters \cite{OppeSS68:ieee, OppeS04:ieeespm}. The latter have been extended to two dimensions \cite{DudgM84:PHSPS,EkstW76:ieeetassp, ChanA78:ieeetcs, YangL07:ieeetsp, PeiL11:ieeetcs}, and generalized to the multidimensional case \cite{GoodE80:ieeetac}, through the definition of $d$D \textit{semi-causality}.
When dealing with  $d$D spectral factorization with $d\geq2$, problems arise from the lack of a unique definition of  $d$D minimum phase, leading to the existence of multiple possible solutions \cite{EkstW76:ieeetassp}.
 Mersereau and Dudgeon \cite{MersD74:ieeetsp} propose an alternative approach to describe $2$D signals, based on a transformation of $2$D sequences into $1$D vectors, such that linear convolution becomes \emph{helical} (cf. Section \ref{sec-helix}). 
Helical coordinates were applied later in \cite{RickC00:SP} to blind deconvolution in helioseismology, through spectral factorization. 

The purpose of the present paper is to prove that helical mapping and spectral factorization are asymptotically equivalent.  We will show that the $1$D causal solution after helical mapping is not only recursively computable and stable, but asymptotically convergent to the \textit{semi-causal} $d$D solution, after inverse remapping. 

Sections \ref{sec-helix} and \ref{sec-spectr-fact} introduce helical mapping, and spectral factorization, respectively. Section \ref{sec-proof} proves the asymptotic equivalence of helical spectral factorization with its $d$D counterpart, whereas Section \ref{sec-back-prop} shows how it applies to the case of wavefield propagation. Finally, Section \ref{sec-application} presents an example of causal physical filters (the anti-causal component representing the reversed time solution in propagating systems), and an application to helioseismology.

\section{The helical coordinate system} \label{sec-helix}
Physical fields are generally sampled through space (in a domain $\Omega \subset \mathbb{R}^3$) and time (in a domain $U \subset \mathbb{R}^+$), resulting in a $d$D data cube, where $d\leq4$. Under certain conditions (translational invariance through homogeneous and linear media), the underlying processes of propagation can be modeled   as Linear Shift Invariant (LSI) filters (including Linear Time Invariant (LTI) and space invariant systems). Moreover, thanks to multilinearity, the measured data cube can be represented as a tensor. 

A tensor of order $d$ is an element of the outer product of vector spaces $\mathbb{S}_1 \otimes \cdots \otimes \mathbb{S}_d$, and can be represented by a multi-way array (or multidimensional matrix), once bases of spaces $\mathbb{S}_i$ have been fixed. The order $d$ of a tensor corresponds to the number of dimensions of the physical system. The \textit{mode-$i$ fiber} of a tensor is a vector obtained when all indices are fixed except the $i$-th. It is often useful to represent a tensor in  matrix form \cite{Como14:ieeespm}: the \textit{mode-$i$ unfolding} of a tensor $\tens{Y} \in \mathbb{R}^{I_1 \times I_2 \times \cdots \times I_d}$ reorders its elements, through arranging the mode-$i$ fibers into columns of a matrix denoted $\matr{Y}_{\!(i)}$. 
Furthermore, it is convenient to represent tensors as vectors: the vectorization of a tensor $\tens{Y}$ is generally defined as a vectorization of the associated mode-1 unfolding matrix, that stacks the columns of $\matr{Y}_{\!(1)}$ into a vector $\vect{y} \in \mathbb{R}^{I_1 I_2 \cdots I_d}$. Unfolding and then vectorizing a tensor are equivalent to gradually reducing its order: for instance, a $3$D data cube is at first transformed into a matrix and then into a vector. 
It is clear that there exist potentially multiple ways of unfolding and vectorizing tensors, thus reducing their order, all related to the definition of a particular ordering relation.

Since causality of a LTI filter is related to the implicit order of the computation in convolution, $d$D causality is associated with the existence of an ordering relation organizing the elements of the multidimensional data cube.
For $1$D systems, there is only the natural (or reversed) ordering (\textit{i.e.} \textit{fully ordered} computation of a linear transformation such as convolution). For $d$D systems, the computation is only \textit{partially ordered}, as there are multiple possible ordering relations \cite{DudgM84:PHSPS}. 
In order to implement any $d$D linear transformation ($d$D convolution, $d$D filtering, etc.), there is a need to define an ordering map $p = I (n_1,n_2, \dots, n_d)$. Thus, if $p' = I (n_1',n_2', \dots, n_d')$, $p<p'$ implies that the output at $(n_1,n_2, \dots, n_d)$ will be computed before the output at $(n_1',n_2', \dots, n_d')$.

One of the simplest ordering relations is the helical transformation of a tensor $\tens{Y}$, that stacks all the elements of any mode-$i$ unfolding, either row-wise or column-wise. Thus, the \textit{helix} is a form of vectorization. 
Therefore, there exist several possible helical transforms of a tensor, corresponding to a progressive reduction of the order.  
For instance, the helical transform of a $2$D sequence $f(m,n), \, m \in \mathbb{N}, \, 0 \leq n \leq N-1$ can be represented as a  row-wise  invertible mapping: 
\begin{equation}\nonumber
\phi_{2} :  \mathbb{N} \times [0:N-1] \longrightarrow \mathbb{N}, \: (m,n) \longmapsto p = Nm+n
\end{equation}
which corresponds to concatenating  the rows  of the matrix $f(m,n)$, with condition of invertibility $N < \infty$. Alternatively,  the column-wise invertible mapping of a sequence $f(m,n), \, 0 \leq m \leq M-1, \, n \in \mathbb{N}$ can be written as
\begin{equation}\label{eq-helix-2}
\phi_{2} :  [0:M-1]  \times \mathbb{N} \longrightarrow \mathbb{N}, \: (m,n) \longmapsto p = m+Mn
\end{equation}
which corresponds to concatenating  the columns  of  matrix $f(m,n)$, with condition of invertibility $M < \infty$.
Equivalently, one $3$D helical mapping of a data cube 
$f(l,m,n), \text{ for } 0 \leq l \leq L-1, 0 \leq m \leq M-1, n \in \mathbb{N}$ is given by
\begin{equation}\nonumber
\begin{aligned}
\phi_{3}: & [0:L-1] \times [0:M-1] \times \mathbb{N} \longrightarrow \mathbb{N} \\
& (l,m,n) \longmapsto p = L(M n + m) + l
\end{aligned}
\end{equation}
with condition of invertibility $L,M < \infty$. 
In the seventies, Mersereau and Dudgeon \cite{MersD74:ieeetsp}  defined a helical convolution that transforms $2$D convolution through helical periodicities, showing that helical convolution is numerically equal to its $2$D counterpart. 

\section{Cepstral factorization} \label{sec-spectr-fact}
\paragraph{One-dimensional case} 
A $1$D sequence $s(n)$ is \textit{causal} if $s(n) = 0, \text{ for } n<0$, and \textit{minimum phase} if all the poles and zeros of the Z-transform $S(z) = \ZT\{s(n)\}$ are inside the unit circle $\{\lvert z \lvert < 1\}$.
If a sequence is minimum phase, it is also causal. Moreover, minimum phase sequences are also \textit{minimum-delay} (all their energy is concentrated close to time origin $n=0$), they are absolutely summable, and their inverses are both causal and absolutely summable \cite{DudgM84:PHSPS}. 

A means to investigate the question of causality and minimum phase  in relation to  spectral factorization is homomorphic analysis. The homomorphic transform \begin{small}$\HT= \IZT \! \circ \log \circ \ZT$\end{small} (\textit{i.e.} the inverse Z-transform of the complex logarithm of the Z-transform) with inverse \begin{small}$\IHT = \IZT \! \circ \exp \circ \ZT$\end{small} has the advantage of converting convolutions into sums:  \begin{small}$\HT\{s_1 \ast s_2\} = \HT\{s_1\} + \HT\{s_2\}$\end{small}.

The stability condition for a system with impulse response $s$ is that its transfer function $S$ converges on a region containing the unit circle $\{z=e^{i \omega}\}$, or, equivalently, its domain of convergence includes the locus $\{ \lvert z \lvert =1 \}$. 
In this case, the FT, $\FT\{\cdot\}$, of a sequence can be defined as the restriction of its Z-transform on the unit circle. $\HT$ is then defined as the Inverse Fourier Transform (IFT) of the complex logarithm of its Fourier Transform (FT), \begin{small}$\HT = \IFT \! \circ \log \circ \FT$\end{small}, after phase unwrapping of the complex logarithm \cite{Dudg77:ieeetassp}. The complex cepstrum of a limited sequence $s(n), 0 \leq n \leq N$, can be calculated through the Discrete FT (DFT), as an aliased version of the true cepstrum \cite{OppeSS68:ieee}.

If $s(n)$ is the autocorrelation of a sequence $x(n)$ assumed stationary, its homomorphic transform $\cep{s} = \HT\{s\}$ is called \textit{complex cepstrum} and corresponds to the IFT of the logarithm of the spectrum,  $\cep{s} = \IFT \{ \log(\lvert X(\omega) \lvert^2)\}$.  We will refer in what follows to positive definite or autocorrelation sequences $s(n)$. 
The complex cepstrum is useful to characterize causality: a sequence $s(n)$ is minimum phase if its cepstrum is causal \cite{SteiD77:ieeeicassp, PedeAD10:ieeeicassp}: $\cep{s}(n)=0$ for $n<0$. 

Inversely, maximum phase sequences can be defined as minimum phase sequences reversed in time\footnote{
A \textit{maximum phase} sequence is anticausal with an anticausal inverse and anticausal complex cepstrum: $\cep{s}(n)=0$ for $n>0$.}, and any absolutely summable signal, if conveniently shifted in time, can be expressed as the convolution between minimum and maximum phase parts \citep{DudgM84:PHSPS}.  As a result, its complex cepstrum is the sum of causal and anti-causal parts, and it is absolutely summable: $s(n) = s_+(n) \ast s_-(n)$
corresponds to a product in the frequency domain $S(\omega) = S_+(\omega) S_-(\omega)$, and to a sum in the cepstrum domain $\cep{s}(n) = \cep{s}_+(n) + \cep{s}_-(n)$.  In particular, the poles $z_i$ of $S(z)$ such that $\lvert z_i \lvert < 1$ are associated with the causal part of the cepstrum, whereas the poles such that $\lvert z_i \lvert > 1$ correspond to the anti-causal part of the cepstrum \cite{SteiD77:ieeeicassp}. 
The $1$D factorization problem consists in decomposing a real (or zero-phase) sequence (such as a power spectral density) into minimum and maximum phase terms. 

\paragraph{Higher-dimensional case} 
The concept of minimum phase solutions of the spectral factorization problem was extended to $2$D signals in \cite{EkstW76:ieeetassp}. However, the derivation of the concepts of $2$D causality and minimum phase from the $1$D equivalent is not straightforward, due to the existence of several ordering relations in the $(x,y)$ plane.  
From the $2$D Z-transform  
\begin{equation}\label{eq-2D-z-transform}
S(z_1, z_2) = \sum_{m=-\infty}^{\infty} \sum_{n=-\infty}^{\infty} s(n_1,n_2) z_1^{-n_1} z_2^{-n_2},
\end{equation}
the stability condition for a system with impulse response $s$ is that its transfer function $S$ converges on a region containing the unit bicircle ($z_1=e^{i \omega_1}, z_2=e^{i \omega_2}$), or, equivalently, its domain of convergence includes the locus $\{\lvert z_1 \lvert =1, \lvert z_2 \lvert =1\}$.
Thus, the $2$D FT, $\FT\{\cdot\}$, of a $2$D sequence is defined as the restriction of its Z-transform on the unit bicircle.
In what follows, we shall consider spectral density functions $S(z_1, z_2)$ as the $2$D Z-transform of autocorrelations $s(n_1,n_2)$.
The \textit{$2$D spectral factorization} is a decomposition of the $2$D Z-transform $S(z_1, z_2)$ into factors that are free of poles and zeros in certain regions of $\mathbb{C}^2$.
In particular, a sequence $s(n_1,n_2)$ is said to be $\textit{min-min}$ phase, if none of the poles and zeros of $S(z_1, z_2)$ lie in the closed domain $\{|z_1| \geq 1, |z_2| \geq 1\}$; \textit{min-mix} phase if none of its poles or zeros lie in $\{|z_1| \geq 1, |z_2| = 1\}$. See \cite{EkstW76:ieeetassp} for further details. 
We hereby refer to the min-min phase as a \textit{strict} $2$D minimum phase. 

Analogously, in the 3D case, a sequence $s(n_1, n_2, n_3)$ is defined as \textit{min-min-min} phase if poles and zeros of its Z-transform $F(z_1, z_2, z_3)$ do not lie in $\{|z_1| \geq 1, |z_2| \geq 1,  |z_3| \geq 1\}$.

$2$D causality can be studied through the $2$D complex cepstrum  \cite{DudgM84:PHSPS}, which is defined through $2$D homomorphic transform $\HT = \IZT \circ \log \circ \ZT$:
\begin{small}
\begin{equation} \nonumber
\cep{s}(n_1,n_2) = -\frac{1}{4 \pi^2}\ointctrclockwise_{\lvert z_1 \lvert =1}\ointctrclockwise_{\lvert z_2 \lvert =1} \log [S(z_1,z_2)] \cdot z_1^{n_1-1}z_2^{n_2-1} dz_1 dz_2
\end{equation}
\end{small}
Then,  provided that the unit bicircle is confined in the definition domain of the $2D$ Z-transform $S(z_1,z_2)$, that $\lvert S(e^{i \omega_1},e^{i \omega_2}) \lvert \neq 0 \text{ for } -\pi \leq \omega_1, \omega_2 \leq \pi$ and that the phase of the signal has been adjusted to be continuous and periodic in both frequency variables $\omega_1$ and $\omega_2$ (\textit{i.e.} $2$D phase unwrapping), we can write\footnote{The complex cepstrum of a time limited sequence $s(n_1,n_2), 0 \leq n_1 \leq N_1, 0 \leq n_2 \leq N_2$ can then be calculated through the $2$D Discrete Fourier Transform (DFT) \cite{Dudg77:ieeetassp}, as a spatially aliased version of the cepstrum \cite{DudgM84:PHSPS}.}
\begin{equation} \nonumber
\cep{s}(n_1,n_2) = \frac{1}{4 \pi^2}\!\int_{-\pi}^{\pi}\int_{-\pi}^{\pi} \log [S(e^{i \omega_1},e^{i \omega_2})] \cdot e^{i \omega_1 n_1 + i \omega_2 n_2} d\omega_1 d\omega_2  
\end{equation}

Based on the definition of \textit{non-symmetric half plane (NSHP)} as a region of the form $\{n_1 \geq 0, n_2 \geq 0\} \cup \{n_1 > 0, n_2 < 0\}$ or $\{n_1 \geq 0, n_2 \leq 0\} \cup \{n_1 > 0, n_2 > 0\}$ or their rotations, an \textit{admissible region} is the Cartesian product of a sector
\footnote{A \textit{sector} $S(\alpha, \beta)$ is defined in polar form as $S(\alpha, \beta) = \{(r,\theta) \lvert r>0, \alpha < \theta < \beta\}$.}
and a NSHP. 
Before introducing the subject of multidimensional spectral factorization, we must restate some preliminary results from \cite{EkstW76:ieeetassp}.

\begin{defi}
Given a sequence $x(n_1,n_2)$, a \textit{projector operator} $P$ is defined as the multiplication by a window $w_P(n_1,n_2)$ with support $\mathcal{R}_w \subset \mathbb{R}^2$.
\end{defi}

\begin{propo}\label{prop-projection}
Let $s(n_1, n_2)$ be an autocorrelation, or a non negative definite sequence, and its Z-transform be the spectral function $S(z_1, z_2)$. The \textit{$2$D spectral factorization} of $s$ results in a decomposition of the range of its cepstrum $\cep{s}$ into admissible regions, through a set of projections operators $P_k$ whose sum is the identity ($\prod_k{w_k} = 0, \sum_k{w_k} = 1$). 
\end{propo}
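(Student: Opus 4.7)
My plan rests on exploiting the homomorphism property of the $2$D cepstral transform $\HT = \IZT \circ \log \circ \ZT$, which carries a convolutive factorization $s = s_1 \ast \cdots \ast s_K$ into an additive decomposition $\cep{s} = \sum_k \cep{s}_k$. The central idea, extending the $1$D correspondence (minimum phase $\Leftrightarrow$ causal cepstrum) already recalled in Section \ref{sec-spectr-fact}, is that in $2$D each factor whose poles and zeros lie outside a prescribed region of $\mathbb{C}^2$ (min-min, min-mix, max-mix, \ldots) has a complex cepstrum supported on a corresponding admissible region, i.e.\ a Cartesian product of a sector and an NSHP.

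First, I would assemble this correspondence explicitly: to each admissible region $\mathcal{R}_k$ I associate the phase class whose dual exclusion domain in the $(z_1,z_2)$ plane forbids poles and zeros, and I verify, via the contour-integral expression of $\cep{s}(n_1,n_2)$ given above, that factors of this class have cepstrum vanishing outside $\mathcal{R}_k$. This step rests on deforming the bicircular contour into the exclusion domain, exactly as in the $1$D argument relating the sign of $n$ to $\lvert z \rvert \lessgtr 1$, but now performed separately in the radial and angular variables of $(z_1, z_2)$.

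Next, I would introduce the window functions $w_k$ with $w_k(n_1,n_2)=1$ on $\mathcal{R}_k$ and zero elsewhere. The conditions $\sum_k w_k = 1$ and $\prod_k w_k = 0$ encode that the $\mathcal{R}_k$ tile $\mathbb{Z}^2$ disjointly, so the associated projectors $P_k : x \mapsto w_k \cdot x$ form a resolution of the identity, and
\begin{equation} \nonumber
\cep{s} \;=\; \Bigl(\sum_k w_k\Bigr)\cep{s} \;=\; \sum_k w_k\,\cep{s} \;=\; \sum_k P_k\,\cep{s}.
\end{equation}
Pulling back through the inverse homomorphic transform, each $s_k := \IHT\{P_k \cep{s}\}$ lies in the phase class attached to $\mathcal{R}_k$, and their convolutive product recovers $s$ since $\IHT$ sends sums into convolutions.

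The main obstacle I expect is geometric rather than algebraic: one must actually exhibit a family of admissible regions that tiles $\mathbb{Z}^2$ exactly. NSHPs alone only half-cover the plane, so the sector factors and suitable rotations/reflections must be chosen so that their disjoint union exhausts every index without overlap on the axes, where the cepstrum samples must be attributed unambiguously. A secondary but lighter subtlety is ensuring that each $P_k \cep{s}$ is genuinely the cepstrum of a stable, absolutely summable factor; this follows from the standing assumptions on $S$ collected just above the proposition (non-vanishing on the unit bicircle, $2$D phase unwrapping, and convergence of the $\ZT$ on a neighborhood of that bicircle).
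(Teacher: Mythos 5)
The paper does not actually prove Proposition~\ref{prop-projection}: it only states that the result rests on Theorem~\ref{theo-l1} and refers the reader to \cite{EkstW76:ieeetassp} for the proof. Your outline reconstructs essentially that same argument --- the homomorphism $\HT$ turning convolution into addition, a family of admissible regions tiling $\mathbb{Z}^2$ so that the windows $w_k$ form a resolution of the identity, and the pullback of each $P_k\cep{s}$ through $\IHT$ --- so in approach you coincide with the route the paper implicitly takes. Two remarks, however. First, the step you call a ``secondary but lighter subtlety,'' namely that each $s_k=\IHT\{P_k\cep{s}\}$ is genuinely a stable, recursively computable factor in the phase class attached to $\mathcal{R}_k$, is in fact the entire mathematical content of Theorem~\ref{theo-l1}: windowing an absolutely summable cepstrum trivially preserves summability, but one must show that exponentiating in the transform domain (a series of repeated self-convolutions of $P_k\cep{s}$) keeps the support inside the admissible region and preserves absolute summability; this works precisely because admissible regions are closed under addition of indices, so that $\ell_1$ sequences supported there form a Banach algebra under convolution. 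That argument cannot be waved through by appeal to the standing assumptions on $S$. Second, your ``main obstacle'' (exhibiting a tiling) is the easy part and is already answered in the text following the proposition: the four quadrants, or the two NSHPs $\mathcal{R}_{\oplus+}$ and $\mathcal{R}_{\ominus-}$, with boundary and origin samples attributed by convention (the origin of the cepstrum carries the gain), are the intended choices. In short, your plan is sound but its weight distribution is inverted: the tiling is routine, and the factor-stability claim you defer is the theorem one must actually prove (or, as the paper does, cite).
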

 Prop. \ref{prop-projection} is based on a result stated by the theorem  below, whose proof can be found in \cite{EkstW76:ieeetassp}: 
\begin{theo} \label{theo-l1}
Let $\cep{s}$ be the cepstrum of a sequence $s$ (assuming $s$ is absolutely summable), and let $P(\cep{s})$ be its projection onto an admissible region, then $s_{P} = \IHT\{P(\cep{s})\}$ is recursively computable and stable.
\end{theo}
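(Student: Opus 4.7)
The plan is to establish the two claims---stability and recursive computability---by combining the additive closure (semigroup) property of admissible regions with standard results on the homomorphic transform in $\ell^1$. Under the hypotheses that $s$ is absolutely summable and that $\log S$ is well defined on the unit bicircle (with a properly unwrapped phase, as required earlier in Section~\ref{sec-spectr-fact}), the cepstrum $\cep{s}$ lies in $\ell^1(\mathbb{Z}^2)$. Since the projector $P$ is pointwise multiplication by a window $w_P$ bounded by $1$, the projected cepstrum $P(\cep{s})$ also lies in $\ell^1$.

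For stability, I would invoke a multidimensional Wiener--L\'evy theorem on the Wiener algebra of the bi-torus. The Fourier transform of the $\ell^1$ sequence $P(\cep{s})$ is a continuous function on $\{|z_1|=|z_2|=1\}$, and because $\exp$ is entire, the composition $\exp(\FT\{P(\cep{s})\})$ is again the Fourier transform of some $\ell^1$ sequence; by the very definition of $\IHT$, this sequence is $s_P$. Hence $s_P \in \ell^1(\mathbb{Z}^2)$, which delivers stability of the resulting filter.

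For recursive computability I would expand the exponential as a convergent power series, which in the sequence domain reads
\begin{equation} \nonumber
s_P \;=\; \sum_{k=0}^{\infty} \frac{1}{k!}\, \underbrace{P(\cep{s}) \ast \cdots \ast P(\cep{s})}_{k \text{ factors}},
\end{equation}
with the convention that the $k=0$ term is the Kronecker delta at the origin. The crucial step is the semigroup property of the admissible region $R$: being a Cartesian product of a sector and an NSHP, both of which are closed under vector addition, one has $R+R\subseteq R$. By induction, every convolution power of $P(\cep{s})$ is supported in $R$, and hence so is $s_P$. The NSHP factor of $R$ then induces a one-sided total order on $\mathbb{Z}^2$ possessing a first element, along which $s_P$ can be evaluated sample by sample from previously obtained values, i.e., recursively.

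The step I expect to be the main obstacle is the careful verification that $R+R \subseteq R$ for every admissible orientation of NSHP and sector---essentially checking that the definition of ``admissible region'' is precisely what is required for this additive closure---together with the conditions needed to invoke the Wiener--L\'evy theorem in two variables. Once those two technical ingredients are in place, both stability and recursive computability fall out of the exponential series expansion and the resulting support tracking.
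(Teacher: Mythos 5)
The paper does not actually prove this theorem itself---it explicitly defers to Ekstrom and Woods \cite{EkstW76:ieeetassp}---and your argument reconstructs precisely the standard proof given there: $\ell^1$ membership of $P(\cep{s})$ plus the Wiener--L\'evy theorem on the bi-torus for stability, and the convolution-exponential series $s_P=\sum_k \frac{1}{k!}P(\cep{s})^{\ast k}$ together with the additive closure $R+R\subseteq R$ of the admissible region for support tracking and hence recursive computability. The only detail worth adding is that the identical argument applied to $-P(\cep{s})$ shows that the convolution inverse of $s_P$ is also absolutely summable and supported in $R$, which is what ultimately justifies calling the resulting recursive filter both computable and stable.
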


In particular, a sequence $s(n_1, n_2)$ is min-min phase if its cepstrum is causal, \textit{i.e.} with support $\mathcal{S}_{\cep{s}}$ included in the first quadrant: $\mathcal{S}_{\cep{s}} \subset \mathcal{R}_{++}$, with $\mathcal{R}_{++} \coloneqq \{n_1 \geq 0, n_2 \geq 0\}$; and \textit{semi-minimum phase} if its cepstrum is semi-causal, \textit{i.e.} with support included in the upper NSHP: $\mathcal{S}_{\cep{s}} \subset \mathcal{R}_{\oplus+}$, with $\mathcal{R}_{\oplus+} \coloneqq \{n_1 \geq 0, n_2 \geq 0\} \cup \{n_1 < 0, n_2 > 0\}$.  In the latter case, $s(n_1, n_2)$ is minimum-phase only with respect to the variable $n_2$, as depicted in Figure \ref{fig-2D-causality}. 

Recursive computability is equivalent to the existence of an ordering relation. If the admissible regions coincide with the 4 quadrants, the four projections of the cepstrum onto $\mathbb{R}_{++}, \mathbb{R}_{+-}, \mathbb{R}_{-+}, \mathbb{R}_{--}$ give a four factor decomposition and involve a strong definition of $2$D causality (cf. Figure \ref{fig-quarter-R++}). If the admissible regions coincide with the upper and the lower NSHPs, the two projections of the cepstrum onto $\mathbb{R}_{\oplus+}, \mathbb{R}_{\ominus-}$ yield a two factor decomposition and involve a weaker definition of $2$D semi-causality (cf. Figure \ref{fig-non-sym-HP-R++}) \cite{EkstW76:ieeetassp}\footnote{
with $\mathcal{R}_{+-} \coloneqq \{n_1 \geq 0, n_2 \leq 0\}$, $\mathcal{R}_{-+} \coloneqq \{n_1 \leq 0, n_2 \geq 0\}$, $\mathcal{R}_{--} \coloneqq \{n_1 \leq 0, n_2 \leq 0\}$, and $\mathcal{R}_{\ominus-} \coloneqq \{n_1 \leq 0, n_2 \leq 0\} \cup \{n_1 > 0, n_2 < 0\}$.}. Through the projection onto $\mathcal{R}_{\oplus +}$ and $\mathcal{R}_{\ominus -}$, the cepstrum of the autocorrelation is decomposed into $\cep{s} = \cep{s}_{\oplus+}  + \cep{s}_{\ominus-}$
corresponding, after inverse homomorphic transform, to
$s= \mathcal{H}^{-1} (\cep{s}) = s_{\oplus+}  \ast s_{\ominus-}$.
The two-factor decomposition, based on the definition of NSHPs, is less restrictive than the four factor decomposition, as it can describe the general class of positive definite magnitude functions.
A magnitude function, such as the power spectral density in the spectral factorization problem, can be expressed by a limited number of factors, omitting those with conjugate symmetry.
Then, for the two factor decomposition we have
\begin{small}
$
\lvert s(m,n) \lvert^2 = \lvert s_{\oplus+}(m,n)\lvert^2
$
\end{small}.

\begin{figure}[h!]
\centering
	\subfloat[$\mathcal{R}_{++}$ - $2$D causality]{\includegraphics[scale=0.37]{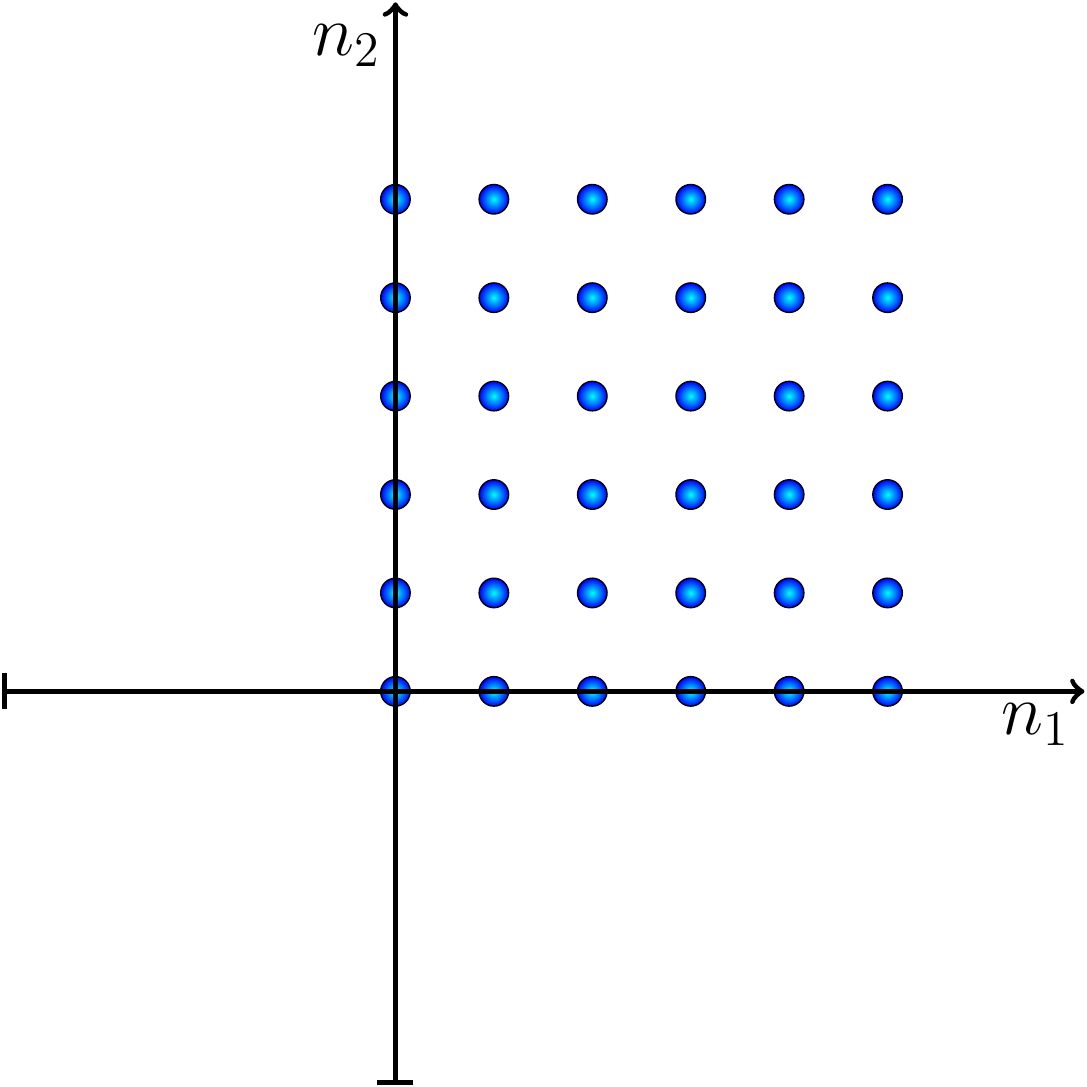}\label{fig-quarter-R++}}\\
	\subfloat[$\mathcal{R}_{\oplus+}$ - $2$D semi-causality]{\includegraphics[scale=0.37]{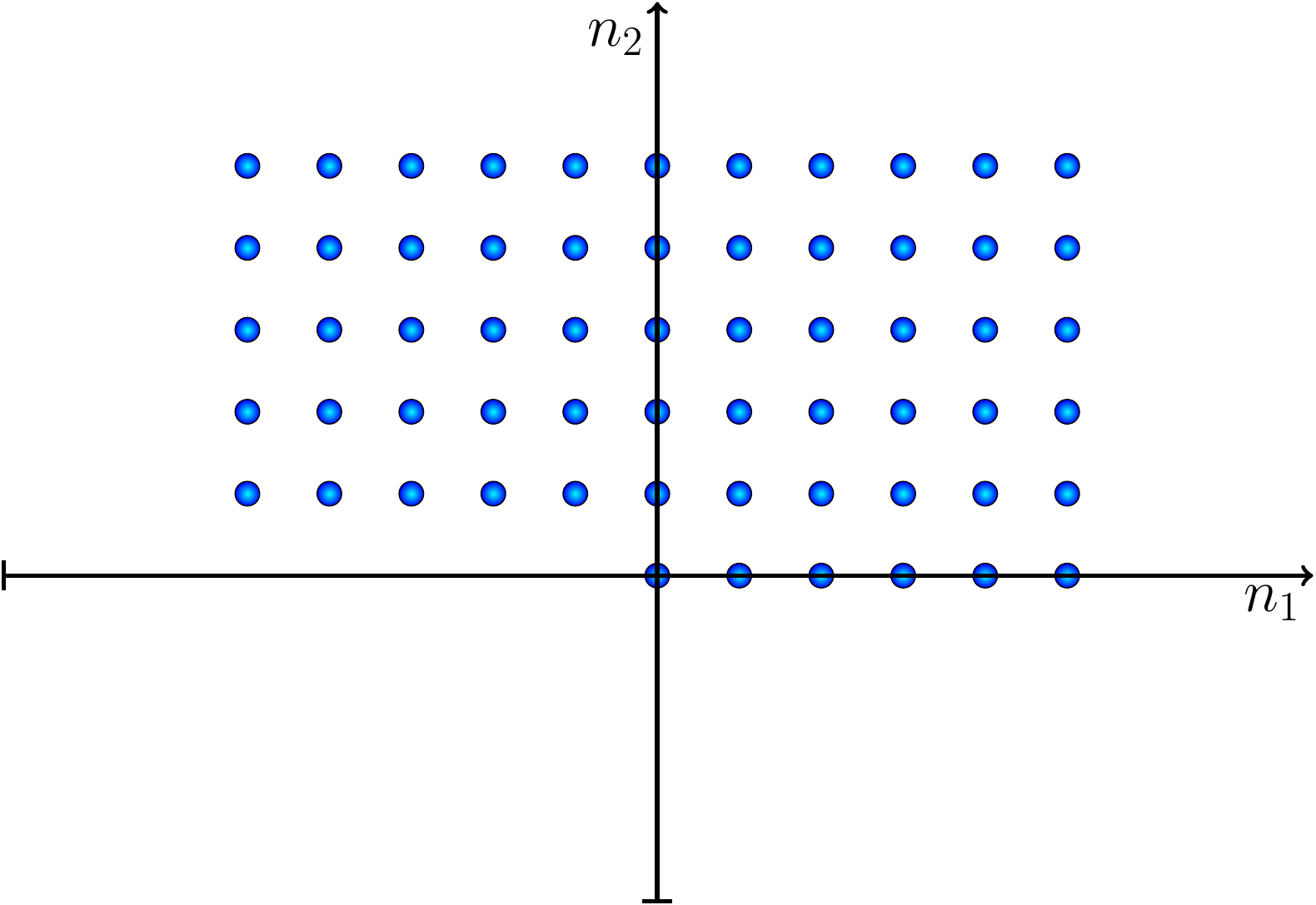}	\label{fig-non-sym-HP-R++}}
\caption{Examples of admissible regions related to $2$D causality}\label{fig-2D-causality}
\end{figure}

Spectral factorization was extended in \cite{GoodE80:ieeetac} to multiple dimensions so as to process data cubes. It is based on multidimensional homomorphic transform, and on the definition of $d$D non-symmetric half-spaces (NSHS), such as the $3$D upper NSHS $\mathcal{R}_{\oplus \oplus +}$. Thus, all the results presented in this section are easily generalized to the $d$D case.

\section{The effect of the helical transform on the multidimensional factorization problem}\label{sec-proof}
 
This Section investigates the effects of the helical ordering relation onto the multidimensional homomorphic analysis.  
We can initially state the following fact, which can be easily generalized to $d$D systems: 
\begin{propo}\label{prop-asymp-equiv}
Let $f(m,n)$ define an absolutely summable $2$D sequence, from which we want to extract the $2$D semi-minimum phase component. Let $\hel{f}(p)$ be the helical transform of $f(m,n)$, after column-wise mapping $p = m +Mn$, and $\hel{f}_+(p)$ its $1$D minimum phase projection, corresponding to causal cepstrum $\hel{\cep{f}}_+(p)$. Then, after inverse helical mapping of $\hel{f}_+(p)$, the solution $f^{hel}_+(m,n)$ is recursively computable and stable, and it tends to be, for large $M$, the semi-minimum phase solution corresponding to semi-causal cepstrum $\cep{f}_+(m,n)$ described in Section \ref{sec-spectr-fact}. 
\end{propo}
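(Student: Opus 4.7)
The plan is to follow the 2D cepstrum of $f$ through the helical map, track what the 1D causal projection does in the $(m,n)$-plane, and bound the discrepancy with the genuine semi-causal projection by two error terms that both vanish as $M\to\infty$.

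The first step is the algebraic link between 1D and 2D spectra. Direct substitution of $p=m+Mn$ into $\hel{F}(z)=\sum_{p}\hel{f}(p)z^{-p}$ shows $\hel{F}(z)=F(z,z^{M})$. Taking the complex logarithm and reading the coefficient of $z^{-p}$ gives the aliasing identity
\begin{equation}\nonumber
\hel{\cep{f}}(p)=\sum_{k\in\mathbb{Z}}\cep{f}(p-Mk,\,k),
\end{equation}
expressing the 1D helical cepstrum as an $M$-fold aliased version of $\cep{f}$ sampled along the helix. Re-interpreting the inverse of $p=m+Mn$ with the centered convention $m\in\{-\lfloor M/2\rfloor,\dots,\lceil M/2\rceil-1\}$, a simple case check shows that the condition $p\geq 0$ becomes $(m,n)\in\mathcal{R}_{\oplus+}\cap\{|m|<M/2\}$. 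Hence $\hel{\cep{f}}_+$, pulled back to the plane, is exactly $\cep{f}$ restricted to the strip NSHP, corrupted only by the aliasing tails above.

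The convergence step is the crux. Since $f$ is absolutely summable and its Fourier transform is bounded away from zero, $\cep{f}$ is in $\ell^{1}(\mathbb{Z}^{2})$. The aliasing contributions $\cep{f}(p-Mk,k)$ with $k\neq 0$ are supported on pairs whose first coordinate has modulus at least $M/2$, so their $\ell^{1}$ mass tends to $0$ as $M\to\infty$; meanwhile, the strip NSHP exhausts $\mathcal{R}_{\oplus+}$. Combining these two vanishing errors shows that the pulled-back cepstrum converges in $\ell^{1}$ to $\cep{f}_+$, and continuity of $\IHT$ on absolutely summable cepstra delivers the pointwise convergence $f^{hel}_+(m,n)\to f_+(m,n)$. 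Stability and recursive computability of $f^{hel}_+$ are inherited directly from the 1D minimum phase property of $\hel{f}_+$: the helical map induces a total order on $\mathbb{Z}^{2}$ that is precisely the ordering relation required to apply Theorem~\ref{theo-l1} with the strip NSHP, which is admissible in the sense of Section~\ref{sec-spectr-fact}.

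The main obstacle is making the limit in this last step rigorous: one has to promote an $\ell^{1}$ control on cepstra to pointwise control on signals through the exponential that defines $\IHT$. Absolute summability together with boundedness of $\log F$ (which implicitly requires $f$ to be a strictly positive spectral density) are exactly what is needed for this nonlinear step to respect the vanishing of the aliasing and truncation errors, so the hypothesis that $f$ is absolutely summable has to do the real work here.
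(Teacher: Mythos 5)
Your proof is correct, and it rests on the same geometric observation as the paper's: under the column-wise map $p=m+Mn$ the causality condition $p\geq 0$ pulls back to an NSHP-type region of the $(m,n)$-plane that converges to $\mathcal{R}_{\oplus+}$ as $M\to\infty$, so Theorem~\ref{theo-l1} yields recursive computability and stability, and the limit identifies the semi-minimum phase solution. Where you genuinely go beyond the paper is in how the convergence is controlled. The paper argues purely at the level of supports: it identifies the pulled-back support as the rotated NSHP $n>-m/M$ and lets the rotation angle $\arctan(-1/M)$ vanish, implicitly treating the remapped helical cepstrum as identical to the $2$D cepstrum $\cep{f}$ on that region. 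Your aliasing identity $\hel{\cep{f}}(p)=\sum_{k}\cep{f}(p-Mk,k)$ makes explicit that this identification fails by wrap-around terms, and your $\ell^{1}$ estimate (the $k\neq 0$ terms sample $\cep{f}$ where its first argument exceeds $M/2$ in modulus, hence vanish by summability, while the strip exhausts $\mathcal{R}_{\oplus+}$ by dominated convergence) upgrades the paper's qualitative ``the rotation becomes irrelevant'' to a quantitative convergence statement, which then passes through $\IHT$ by continuity of $\exp\circ\,\ZT$ on $\ell^{1}$. The two descriptions of the pullback --- your centered strip $\mathcal{R}_{\oplus+}\cap\{\lvert m\rvert<M/2\}$ versus the paper's tilted half-plane --- are merely two choices of representatives of $m$ modulo $M$ and carry the same content. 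The hypothesis you rightly flag, namely that $\cep{f}\in\ell^{1}(\mathbb{Z}^{2})$ requires the spectrum to be bounded away from zero and not just $f$ absolutely summable, is tacit in the paper's cepstral setup but is indeed doing the real work in the limit.
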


\begin{proof}[Proof]
If we consider the discrete variable $m$ bounded by $M<\infty$ and we allow the variable $n$ to be unbounded ($n \in \mathbb{N}$), the helical transformation of the dataset $f(m,n)$, $\hel{f}(p)$, is equivalent to a periodization  of $f(m,n)$  with respect to the  bounded  variable $m$.  
After helical transform, the causal component of the $1$D cepstrum $\hel{\cep{f}}(p)$ is given by the contribution for positive $p$. Through the projection operator in Prop. \ref{prop-projection}, the $1$D complex cepstrum $\hel{\cep{f}} = \hel{\cep{f}}_+ + \hel{\cep{f}}_-$ is decomposed into its causal and anti-causal components, so that $\hel{f} = \hel{f}_+ \ast \hel{f}_-$ and 
$
\hel{\cep{f}}_+(p) \neq 0 \text{ for } p \geq 0
$.
Now, $p \geq 0$ is equivalent to $m+Mn \geq 0$ after helical transform, and then to the NSHP $n>-m/M$ on the $2$D plane $(m,n)$:
$$
\cep{f}^{hel}_+(m,n) \neq 0 \text{ for } n>-m/M
$$
Thus, the helical transformation fixes one particular instance among all the possible canonical factorizations. This means that after inverse mapping of the helical minimum phase solution, the support of $2$D cepstrum becomes an upper NSHP rotated of an angle $\theta = \arctan(-1/M)$. Since any rotated NSHP is an admissible region, according to Theorem \ref{theo-l1}, the resulting $2$D filter $f^{hel}_+(m,n) = \IHT \{\cep{f}^{hel}_+(m,n)\}$ is recursively computable and stable. 
If $M \to \infty$, the rotation becomes irrelevant (as $\theta \to 0$), and the support of the solution $f^{hel}_+(m,n)$ and of its cepstrum coincides with the upper NSHP $\mathcal{R}_{\oplus+}$ defined in Section \ref{sec-spectr-fact} (cf. Figure \ref{fig-inverse-helix-cepstrum}). 
\end{proof}

\begin{small}
\begin{figure}[h!]\centering
	\includegraphics[scale=0.3]{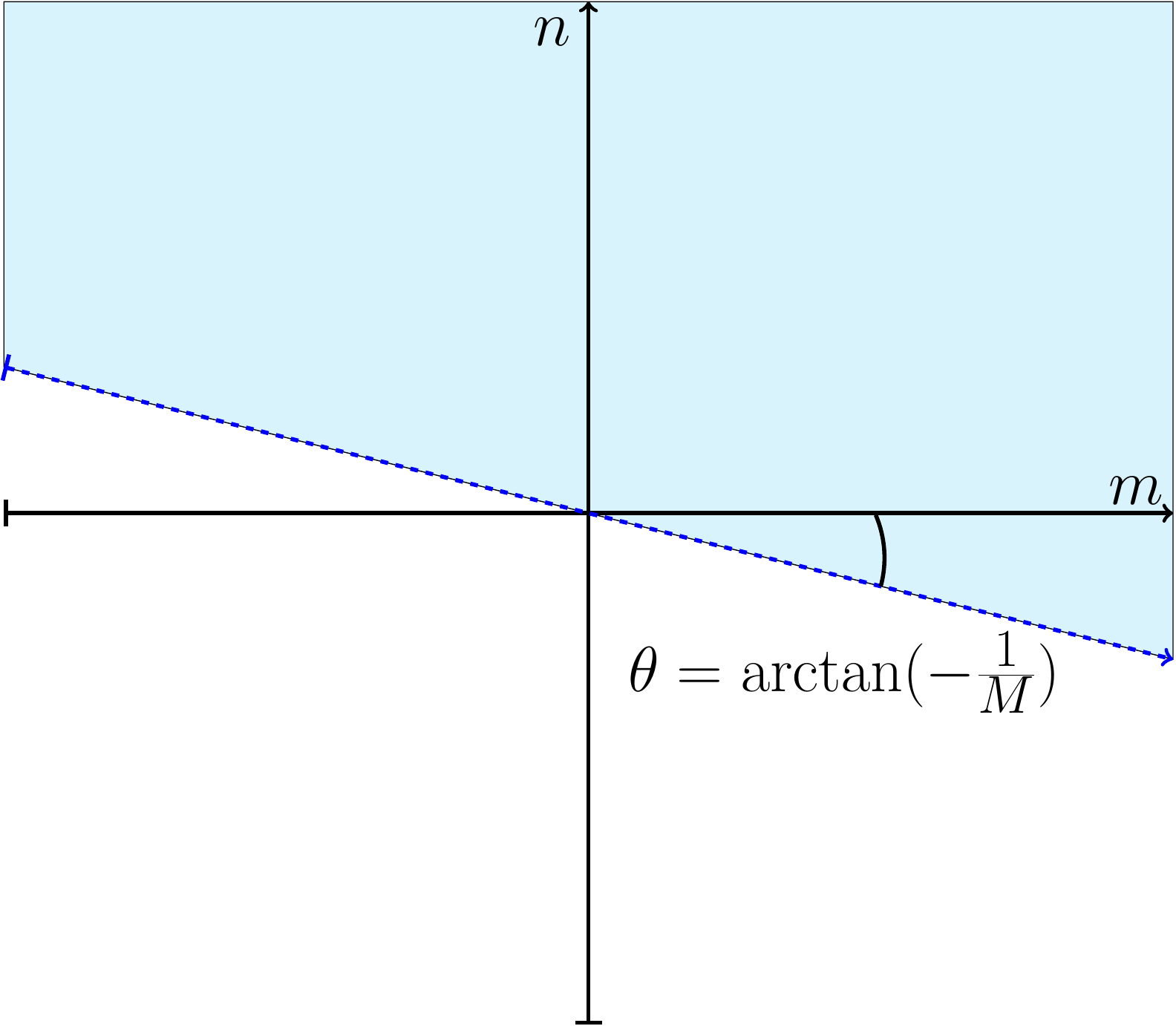}
	\caption{Semi-causal cepstrum after inverse helix transform}\label{fig-inverse-helix-cepstrum}
\end{figure}
\end{small}
Moreover, we can state the following Corollary:
\begin{corol}\label{corol-MP-time}
Since the two factor decomposition of \cite{EkstW76:ieeetassp} leads to a semi-minimum phase term which is minimum phase only in one variable, $M \to \infty$ implies that the helical solution $f^{hel}_+(m,n)$ is minimum phase with respect to the variable $n$.
\end{corol}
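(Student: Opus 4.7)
The plan is to combine the conclusion of Proposition \ref{prop-asymp-equiv} with a one-dimensional slicing argument. By Proposition \ref{prop-asymp-equiv}, as $M \to \infty$ the support of the cepstrum $\cep{f}^{hel}_+$ coincides with the upper NSHP
$\mathcal{R}_{\oplus+} = \{m \geq 0, n \geq 0\} \cup \{m < 0, n > 0\}$.
In particular, $\cep{f}^{hel}_+(m,n) = 0$ for every $m$ whenever $n < 0$, which is precisely the \emph{partial} causality condition characteristic of semi-minimum phase sequences.

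Next, I would fix $z_1 = e^{i \omega_1}$ on the unit circle and restrict $F^{hel}_+(z_1, z_2)$ to this circle in the first variable, obtaining a one-dimensional function of $z_2$. Its 1D cepstrum in $n$ is
\begin{equation}\nonumber
\cep{g}(n;\omega_1) = \sum_{m} \cep{f}^{hel}_+(m,n)\, e^{-i \omega_1 m},
\end{equation}
by linearity of the partial inverse Fourier transform applied to the 2D cepstrum (valid under the absolute summability and non-vanishing magnitude assumptions already standing in Section \ref{sec-spectr-fact}). Because $\cep{f}^{hel}_+(m,n) = 0$ for all $m$ whenever $n<0$, the slice cepstrum $\cep{g}(n;\omega_1)$ vanishes for $n<0$, i.e.\ is causal in $n$. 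By the 1D characterization recalled in Section \ref{sec-spectr-fact}, this is equivalent to $z_2 \mapsto F^{hel}_+(e^{i\omega_1}, z_2)$ being minimum phase. Since this holds for almost every $\omega_1$, $f^{hel}_+(m,n)$ is minimum phase with respect to $n$.

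The main obstacle I expect is justifying the interchange between the partial Fourier transform in $m$ and the complex logarithm that defines the cepstrum: this step requires that $|S(e^{i \omega_1}, e^{i \omega_2})|$ be nowhere zero on the unit bicircle and that the 2D phase be properly unwrapped, which are exactly the hypotheses already imposed in Section \ref{sec-spectr-fact}. Under these hypotheses the argument reduces to the support fact proved in Proposition \ref{prop-asymp-equiv}, and the asymmetry between $m$ and $n$ in $\mathcal{R}_{\oplus+}$ is precisely what singles out $n$ as the minimum-phase variable.
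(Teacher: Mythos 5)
Your argument is correct, and it is somewhat more detailed than what the paper actually does. The paper gives no separate proof of Corollary~\ref{corol-MP-time}: the justification is entirely contained in the word ``since'' --- Proposition~\ref{prop-asymp-equiv} identifies the limit of $f^{hel}_+$ with the semi-minimum phase factor $f_{\oplus+}$ of the two-factor decomposition of \cite{EkstW76:ieeetassp}, and that reference is then invoked for the fact that this factor is minimum phase in $n_2$ only. You instead re-derive that last fact from first principles: starting from the limiting support $\mathcal{R}_{\oplus+}$ of the cepstrum, you fix $z_1=e^{i\omega_1}$, observe that the $1$D cepstrum of the slice is the partial Fourier transform $\sum_m \cep{f}^{hel}_+(m,n)\,e^{-i\omega_1 m}$ of the $2$D cepstrum (the sign convention is right, and the identity is the standard Fourier inversion in the $m$ variable), note that it vanishes for $n<0$ because the NSHP contains no points with $n<0$, and conclude slice-wise minimum phase in $z_2$ from the $1$D causal-cepstrum characterization. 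This is exactly the content of the Ekstrom--Woods statement the paper cites, so your route buys self-containedness at the price of the technical interchanges you correctly flag (absolute summability for Fubini, consistency of the $2$D unwrapped phase with the $1$D unwrapped phase of each slice), all of which are standing hypotheses in Section~\ref{sec-spectr-fact}. One small point worth making explicit: the NSHP support also forces $\cep{f}^{hel}_+(m,0)=0$ for $m<0$, which is not needed for minimum phase in $n$ but is what distinguishes the semi-minimum phase factor from a sequence that is merely causal in $n$ for every $m$; your proof uses only the weaker ``$n<0$'' part, which is all the corollary requires.
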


\bigskip
Separable functions have noteworthy properties. We can state the following Proposition for $2$D functions (we choose to describe the $2$D case for sake of simplicity, without restricting the generality): 

\begin{propo}
If the $2$D function $f(m,n)$ is separable, \emph{i.e.} if $f(m,n) = u(m) v(n)$, the following two facts hold:
\begin{enumerate}
\item  The $2$D cepstrum of a separable function is given by $\cep{f}(m,n)  = \cep{u} (m) \:\delta(n) + \cep{v} (n) \: \delta(m)$, that is non zero only on the axes of the $(m,n)$ plane. Therefore, $2$D semi-causality of the cepstrum $\cep{f}$ is equivalent to strict $2$D causality (\textit{i.e.} the half-plane support reduces to two lines in the plane $(m,n)$: cf. Figure \ref{fig-inverse-helix-cepstrum-separable}. 
\item The $1$D cepstrum of the vectorized data $\hel{f}(p)$ is given by $\hel{\cep{f}}(p) = \cep{u}(p) + \frac{1}{M} \: \cep{v}\left(\frac{p}{M}\right)$. Helical cepstrum $\hel{\cep{f}}$ is then causal if and only if $1$D cepstra $\cep{u}$ and $\cep{v}$ are both causal, and thus, from $1)$ if and only if $2$D cepstrum $\cep{f}$ is strictly causal. Therefore, the equivalence between strict $2$D minimum phase of $f$ and $1$D minimum phase of its helix $\hel{f}$ is always verified, not only asymptotically.
\end{enumerate}

\begin{proof}[Proof]
Let us define a function $f(m,n), \, 0 \leq m \leq M-1, \, n \in \mathbb{N}$, that is separable with respect to its two variables:
$$
f(m,n) = u(m) \: v(n) 
$$
Then, its Z-transform is also separable in the frequency domain: 
\begin{small}
\begin{equation}\nonumber
\begin{aligned}
F(z_1, z_2) &= \sum_{m=0}^{M-1} \sum_{n=0}^{\infty} f(m,n) z_1^{-m} z_2^{-n} = \\
&= \sum_{m=0}^{M-1} \sum_{n=0}^{\infty} u(m) v(n) z_1^{-m} z_2^{-n} = \\
&= U(z_1) \: V(z_2)
\end{aligned}
\end{equation}
\end{small}
Since $\log [F(z_1, z_2)] = \log U(z_1) + \log V(z_2)$, 
the cepstrum becomes
\begin{small}
\begin{equation}\nonumber
\begin{aligned}
&\cep{f}(m,n) = -\frac{1}{4 \pi^2}\ointctrclockwise_{\lvert z_1 \lvert =1} \ointctrclockwise_{\lvert z_2 \lvert =1} \log [F(z_1, z_2)] z_1^{n_1-1}z_2^{n_2-1} dz_1 dz_2 = \\
&= \frac{1}{4 \pi^2}\int_{-\pi}^{\pi} \int_{-\pi}^{\pi} \log [F(e^{i \omega_1},e^{i \omega_2})] e^{i \omega_1 m}e^{i \omega_2 n} d\omega_1 d\omega_2 = \\
&= \frac{1}{4 \pi^2}\int_{-\pi}^{\pi} \int_{-\pi}^{\pi} \left[\log U(e^{i \omega_1}) + \log V(e^{i \omega_2}) \right] e^{i \omega_1 m}e^{i \omega_2 n} d\omega_1 d\omega_2 = \\
&= \cep{u} (m) \: \delta(n) + \cep{v} (n) \: \delta(m) 
\end{aligned}
\end{equation}
\end{small}
We calculate then $1$D log cepstrum of $\hel{f}(p)$, the helical transform of $f$:
\begin{small}
\begin{equation}\nonumber
\begin{aligned}
&\log[\hel{F}(z)] = \log\left[\sum_{p=0}^{\infty} \hel{f}(p) z^{-p}\right]= \log[F(z, z^M)] = \\
&= \log\left[ \sum_{m=0}^{M-1}  u(m)z^{-m} \sum_{n=0}^{\infty} v(n)  z^{-Mn}\right] = \\
&= \log\left[ \sum_{m=0}^{M-1}  u(m)z^{-m}\right] + \log\left[ \sum_{n=0}^{\infty} v(n)  z^{-Mn}\right] = \\
&= \log[U(z)] + \log[V(z^M)]
\end{aligned}
\end{equation}
\end{small}
Thus the cepstrum of a separable function is given by
\begin{small}
\begin{equation}\nonumber
\begin{aligned}
\hel{\cep{f}}(p) &= \frac{1}{2 \pi i} \ointctrclockwise_{\lvert z \lvert =1} \log \left[\hel{F}(z) \right] z^{p-1} dz = \\
&= \frac{1}{2 \pi} \int_{-\pi}^{\pi} \log \left[\hel{F}(e^{i \omega}) \right] e^{i \omega p} d\omega = \\
&= \frac{1}{2 \pi} \int_{-\pi}^{\pi} \left\{ \log \left[U(e^{i \omega}) \right] + \log \left[V(e^{i \omega M}) \right] \right\} e^{i \omega p} d\omega = \\
&= \cep{u}(p) + \frac{1}{M} \: \cep{v}\left(\frac{p}{M}\right)
\end{aligned}
\end{equation}\end{small}
\end{proof}

\end{propo}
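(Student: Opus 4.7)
The plan is to exploit the fact that separability $f(m,n) = u(m)v(n)$ lifts to the Z-transform as $F(z_1,z_2) = U(z_1)V(z_2)$, so that $\log F(z_1,z_2) = \log U(z_1) + \log V(z_2)$ is a sum of two functions each depending on a single variable. By linearity of the inverse Z-transform, the cepstrum inherits this additive decomposition, and the same mechanism will drive the helical computation.

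For part~1, I would apply the 2D inverse Z-transform to each summand separately. The term $\log U(z_1)$ has no dependence on $z_2$, so its 2D inverse transform is supported on the $m$-axis and equals $\cep{u}(m)\,\delta(n)$; symmetrically, $\log V(z_2)$ inverts to $\cep{v}(n)\,\delta(m)$. Hence $\cep{f}(m,n) = \cep{u}(m)\,\delta(n) + \cep{v}(n)\,\delta(m)$, supported on the two coordinate axes. Because any upper NSHP of the form $\mathcal{R}_{\oplus+}$ intersects these axes in the same set as $\mathcal{R}_{++}$ does, namely the non-negative half-axes, the semi-causality condition is automatically equivalent to strict 2D causality in this separable setting.

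For part~2, I would directly compute the generating function of $\hel{f}(p)$. Since $p = m + Mn$ implies $z^{-p} = z^{-m}\,(z^M)^{-n}$, a straightforward double sum gives $\hel{F}(z) = U(z)\,V(z^M)$. Taking logarithms yields $\log \hel{F}(z) = \log U(z) + \log V(z^M)$. Inversion of the first summand is immediate and produces $\cep{u}(p)$. For the second, I would perform the change of variables $\theta = M\omega$ in the inverse Fourier integral and exploit $2\pi$-periodicity of $\log V(e^{i\theta})$, which places the contribution on multiples of $M$ with value proportional to $\cep{v}(p/M)$, matching the formula claimed in the statement.

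The main obstacle is the rigorous handling of $\log V(z^M)$: one must verify that phase unwrapping behaves consistently when $V$ is replaced by $V(z^M)$, and keep track of the normalization arising from the substitution. Once both formulas are established, the equivalence follows cleanly: causality of $\hel{\cep{f}}$ is equivalent to the simultaneous vanishing of $\cep{u}(p)$ and $\cep{v}(p/M)$ for $p<0$, since the two summands live on disjoint supports away from the origin. By part~1 this is exactly the strict 2D minimum-phase condition on $f$, yielding the claimed exact, non-asymptotic equivalence.
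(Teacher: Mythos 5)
Your proposal is correct and follows essentially the same route as the paper: separability gives $F(z_1,z_2)=U(z_1)V(z_2)$ and $\hel{F}(z)=F(z,z^M)=U(z)V(z^M)$, after which the logarithm splits additively and each summand is inverted to yield the axis-supported 2D cepstrum and the helical cepstrum supported on $\{p\}\cup\{Mn\}$. Your explicit change of variables $\theta=M\omega$ with the periodicity argument is in fact a slightly more careful treatment of the $\log V(z^M)$ inversion than the paper provides, but it is the same argument.
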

The same conclusions hold for a separable function of three or more variables. 
\begin{small}
\begin{figure}[h!]\centering
	\includegraphics[scale=0.3]{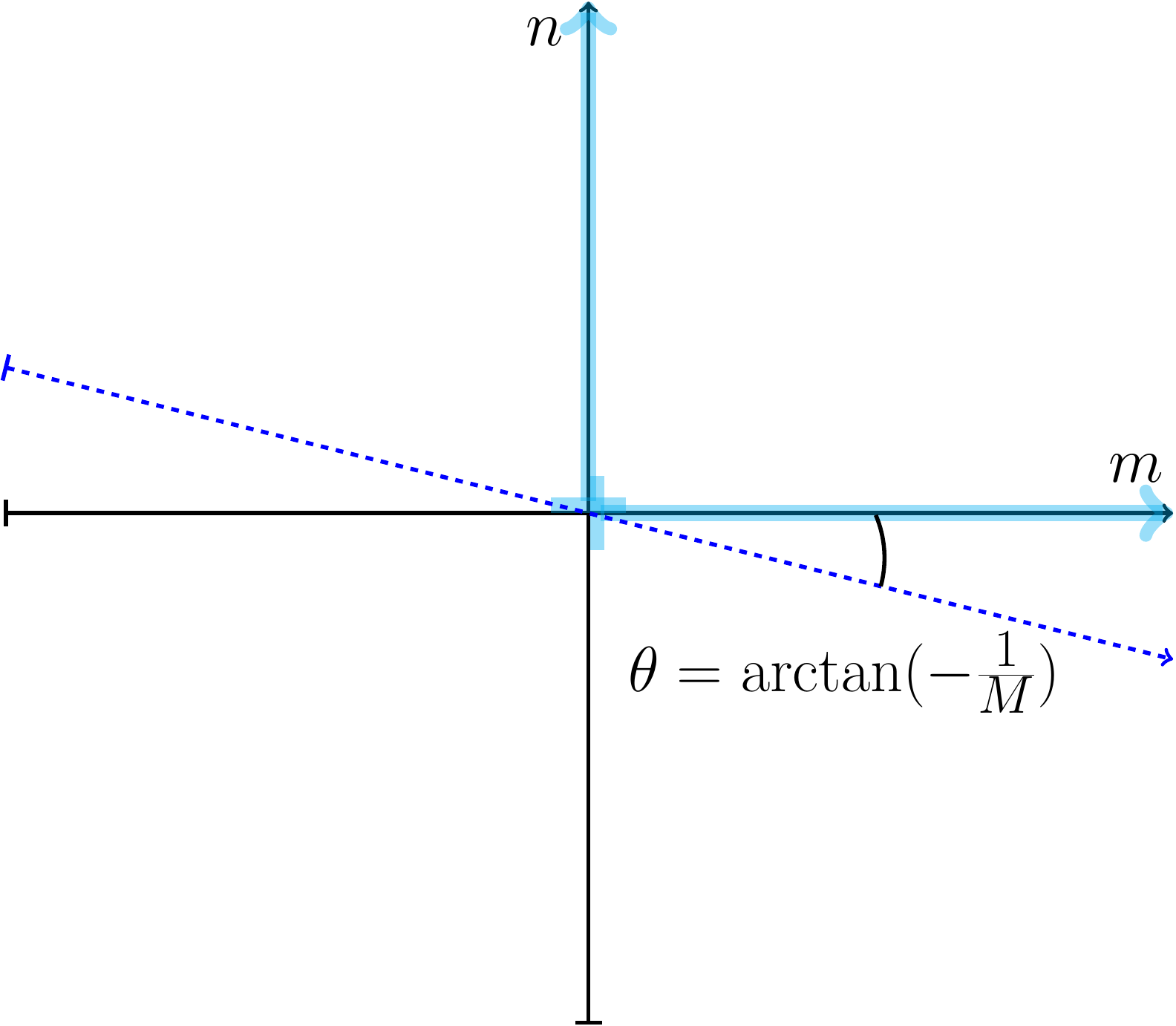}
	\caption{Semi-causal cepstrum for a separable function}\label{fig-inverse-helix-cepstrum-separable}
\end{figure}
\end{small}

\bigskip
We also give an alternative proof of these facts in the Z-domain, in the case of separable functions of three variables: the periodization along one dimension corresponds to a re-mapping and increase in number of poles and zeros of the Z-transform, that nevertheless maintain the same modulus. 

\begin{proof}[Proof]
Let us calculate the Z-transform of a finite sequence of three variables $f(n_x, n_y, n_t)$, with $0 \leq n_x \leq N_x$, $0 \leq n_y \leq N_y$ and $n_t \in \mathbb{N}$: 
\begin{small}
$$
F(z_x, z_y, z_t) = \sum_{n_x=0}^{N_x-1} \sum_{n_y=0}^{N_y-1} \sum_{n_t=0}^{\infty}
f(n_x,n_y,n_t) \cdot z_x^{-n_x} z_y^{-n_y} z_t^{-n_t}
$$
\end{small}
Helical boundary conditions are defined through the helical bijection 
$$
\begin{aligned}	
\phi: & [0:N_x-1] \times [0:N_y-1] \times \mathbb{N} \longrightarrow \mathbb{N} \\
& (n_x,n_y,n_t) \longmapsto n_z = N_x(N_y n_t + n_y) + n_x
\end{aligned}
$$
Starting from the original $3$D function, we can thus define the $1$D helical vectorization (or \textit{helix}) as
$
\hel{f}= f \circ \phi 
$,
with Z-transform 
\begin{small}
$$
\hel{F}(z) = \sum_{n_z = 0}^{\infty} \hel{f} (n_z) z ^{-n_z}
$$
\end{small}
We can then express $\hel{F}$ in relation to $F$ as
\begin{small}
\begin{equation} \nonumber
\begin{aligned}
\hel{F}(z) &= \sum_{n_x=0}^{N_x-1} \sum_{n_y=0}^{N_y-1} \sum_{n_t=0}^{\infty} 
\hel{f} \left(N_x(N_y n_t + n_y) + n_x \right) \cdot z^{-\left(N_x(N_y n_t + n_y) + n_x \right)} = \\
&= \sum_{n_x=0}^{N_x-1} \sum_{n_y=0}^{N_y-1} \sum_{n_t=0}^{\infty}
f (n_x,n_y,n_t) \: z^{-n_x} \left(z^{N_x}\right)^{-n_y} \left(z^{N_x N_y}\right)^{-n_t} 
\end{aligned}
\end{equation}
\end{small}
Therefore, 
\begin{equation} \label{eq-hel-Z-rel-3D}
\hel{F}(z)= F(z, z^{N_x}, z^{N_x N_y} )
\end{equation}
Let us consider the polynomial expression of $F$, in the case of a separable function $f$. For sake of simplicity, the polynomial function is assumed to be an all-zeros function, with a finite number of roots:
\begin{small}
\begin{equation}\nonumber
\displaystyle{F(z_x, z_y, z_t) = A \:
\prod_{i_x=1}^{N_{\alpha,x}} \left( z_x - \alpha_{i_x} \right)
\prod_{i_y=1}^{N_{\alpha,y}} \left( z_y - \alpha_{i_y} \right)
\prod_{i_t=1}^{N_{\alpha,t}} \left( z_t - \alpha_{i_t} \right)}
\end{equation}
\end{small}
On the other hand, from (\ref{eq-hel-Z-rel-3D}) we derive the polynomial expression of $\hel{F}$:
\begin{small}
\begin{equation}\nonumber
\displaystyle{\hel{F}(z) = A \:
\prod_{i_x=1}^{N_{\alpha,x}} \left( z - \alpha_{i_x} \right)
\prod_{i_y=1}^{N_{\alpha,y}} \left( z^{N_x} - \alpha_{i_y} \right)
\prod_{i_t=1}^{N_{\alpha,t}} \left( z^{N_x N_y} - \alpha_{i_t} \right)}
\end{equation}
\end{small}
The following remarks can be made:
\begin{enumerate}
\item 
$\hel{F}$ shares its zeros $\alpha_{i_x}$ with $F$.
\item 
For each zero $\alpha_{i_y}$ of $F$, $\hel{F}$ has $N_x$ corresponding new zeros, 
$
\hel{\alpha}_{i_y,k} = \left| \alpha_{i_y}  \right|^{1/N_x} e^{i 2 \pi k / N_x} \\
$
\item For each zero $\alpha_{i_t}$ of $F$, $\hel{F}$ has $N_x N_y$ corresponding new zeros, 
$
\hel{\alpha}_{i_y,l} = \left| \alpha_{i_y}  \right|^{1/(N_x N_y)} e^{i 2 \pi l / (N_x N_y)} \\
$
\end{enumerate}
Consequently, the zeros of $\hel{F}$ lie inside the unit circle if and only if the zeros of $F$ are inside the unit circle. The same considerations can be made for a rational function $F$ with poles at the denominator. 
\end{proof}

Thus, this result can be easily generalized to $d$D and leads to the following statement:
\begin{propo}
If the variables of $f(n_1, n_2, ..., n_d)$ separate, $1$D minimum phase of its helical transform $\hel{f}$ is equivalent to strict $d$D minimum phase of the $d$D sequence $f$. This equivalence is always verified, not only asymptotically. 
\end{propo}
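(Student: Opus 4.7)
The plan is to generalize the 3D zero-preservation argument just given by combining it with the separable factorization of the $d$D Z-transform used in the preceding proposition. First, I would write the $d$D helical bijection explicitly,
\begin{equation*}
\phi_d(n_1, \ldots, n_d) = n_1 + N_1 n_2 + N_1 N_2 n_3 + \cdots + N_1 N_2 \cdots N_{d-1} n_d,
\end{equation*}
defined for $0 \leq n_k \leq N_k-1$ ($k=1,\ldots,d-1$) and $n_d \in \mathbb{N}$. A telescoping calculation on the Z-transform defining sums, mimicking the 3D case, gives the identity
\begin{equation*}
\hel{F}(z) \;=\; F\bigl(z,\, z^{N_1},\, z^{N_1 N_2},\, \ldots,\, z^{N_1 N_2 \cdots N_{d-1}}\bigr),
\end{equation*}
which is the natural $d$D generalization of equation (\ref{eq-hel-Z-rel-3D}). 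I would not grind through this calculation, as it is a routine induction on $d$.

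Under the separability assumption $f(n_1,\ldots,n_d) = \prod_{k=1}^{d} u_k(n_k)$, the multivariate Z-transform factorizes as $F(z_1,\ldots,z_d) = \prod_{k=1}^d U_k(z_k)$. Substituting into the identity above yields
\begin{equation*}
\hel{F}(z) \;=\; \prod_{k=1}^d U_k\!\bigl(z^{N_1 N_2 \cdots N_{k-1}}\bigr),
\end{equation*}
with the empty product convention $N_1 \cdots N_0 = 1$ for $k=1$. Writing each univariate factor $U_k$ as a ratio of polynomials whose zeros and poles are the roots $\{\alpha_{k,j}\}$, each $\alpha_{k,j}$ contributes $N_1 \cdots N_{k-1}$ roots to $\hel{F}$, namely the $(N_1 \cdots N_{k-1})$-th roots of $\alpha_{k,j}$. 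All such helical roots lie on the circle of radius $|\alpha_{k,j}|^{1/(N_1 \cdots N_{k-1})}$, which sits strictly inside the unit disc if and only if $|\alpha_{k,j}| < 1$. The same reasoning applies verbatim to poles.

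The conclusion then follows by matching this condition to the definition of strict $d$D minimum phase. For a separable $F$, a zero or pole of $F$ lies in the closed exterior polydisc $\{|z_1| \geq 1, \ldots, |z_d| \geq 1\}$ exactly when at least one factor $U_k$ has a root $\alpha_{k,j}$ with $|\alpha_{k,j}| \geq 1$, since the variety $\{z_k = \alpha_{k,j}\}$ meets this closed region iff $|\alpha_{k,j}| \geq 1$. Combining with the previous step, the zeros and poles of $\hel{F}$ all lie strictly inside the unit circle if and only if every $|\alpha_{k,j}| < 1$, if and only if $F$ is strictly $d$D minimum phase. Crucially, the substitutions $z_k \mapsto z^{N_1 \cdots N_{k-1}}$ preserve the ``inside unit disc'' property for every finite choice of the $N_k$, so the equivalence is exact and requires no asymptotic limit.

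The main obstacle I anticipate is purely notational: making the separable-case correspondence between ``roots of $F$ in the exterior polydisc'' and ``roots of at least one $U_k$ outside the unit disc'' fully rigorous, with care taken at the boundary $|\alpha_{k,j}|=1$. Because $f$ is assumed absolutely summable, its region of convergence contains the unit torus and no root lies exactly on any $|z_k|=1$, which removes the strict-versus-closed ambiguity. Once this is settled, the rest is the same mechanical zero-counting already carried out in the 3D case, lifted to arbitrary $d$.
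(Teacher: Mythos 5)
Your argument is correct and follows essentially the same route as the paper: the paper establishes the substitution identity $\hel{F}(z)=F(z,z^{N_x},z^{N_xN_y})$ for the separable $3$D case, tracks how each root of modulus $|\alpha|$ becomes a set of roots of modulus $|\alpha|^{1/N}$, and then asserts the easy generalization to $d$ dimensions, which is exactly what you carry out explicitly. Your added care about the boundary case $|\alpha_{k,j}|=1$ via absolute summability is a small but welcome refinement that the paper leaves implicit.
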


\section{Helical mapping for wavefield propagation} \label{sec-back-prop}
PDEs describing wave propagation generally have two possible solutions: $f_+$ is forward propagating and then causal, $f_-$ is back propagating and then anti-causal; see Eq. (\ref{eq-wave-eq-visc-abs}) for an example.
Let $f(x,t)$ be the general solution of a PDE describing wave propagation, sampled in space $x_m=m \Delta$ and time $t_n = n T$. We want to recover the causal solution of the wave equation through spectral factorization with helical mapping, on the basis of the following result: 
\begin{propo}\label{prop-back-prop}
The helical processing of the data matrix $f(m,n)$ can lead to the cancellation of the back propagating solution of the PDE, if the helical vectorization $\hel{f}(p)$ is performed column-wise, \textit{i.e.} $p = m+Mn$ (thus periodizing $f(m,n)$ with respect to space). 
\end{propo}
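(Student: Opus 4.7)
The plan is to obtain this proposition as a direct consequence of Proposition~\ref{prop-asymp-equiv} and Corollary~\ref{corol-MP-time}, after making precise in which variable the resulting factor is minimum phase. In the PDE setting, the physical decomposition is $f = f_+ \ast f_-$ where $f_+(m,n)$ is supported on $n \geq 0$ (forward propagating, causal in time) and $f_-(m,n)$ is supported on $n \leq 0$ (back propagating, anti-causal in time). The back-propagating component corresponds to the poles/zeros of the spectrum that lie outside the unit circle in the temporal Z-variable $z_t$, while the forward-propagating component corresponds to those inside. So ``canceling the back-propagating solution'' is equivalent to projecting the cepstrum onto a region that is causal in the time index $n$.

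First I would set the stage by noting that the column-wise mapping $p = m + Mn$ periodizes along the space coordinate $m$ (which is naturally bounded by the aperture, $0 \leq m \leq M-1$) while leaving the time coordinate $n \in \mathbb{N}$ unbounded. This is precisely the situation in Proposition~\ref{prop-asymp-equiv} with the roles of $(m,n)$ already consistent with the statement there. Applying that proposition, the $1$D minimum-phase projection $\hel{f}_+(p)$, once remapped via $\phi_2^{-1}$, yields a $2$D sequence $f^{hel}_+(m,n)$ whose cepstrum is supported in a rotated upper NSHP of angle $\theta = \arctan(-1/M)$, which tends to the asymptotic support $\mathcal{R}_{\oplus+}$ as $M\to\infty$.

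Next I would invoke Corollary~\ref{corol-MP-time}: the $\mathcal{R}_{\oplus+}$ support makes $f^{hel}_+(m,n)$ minimum phase precisely in the variable left unbounded by the helical mapping, which here is the time index $n$. Translating back to the Z-domain, this means that all the poles and zeros of $F^{hel}_+(z_1, z_2)$ in $z_2 = z_t$ lie inside the unit circle, i.e.\ $f^{hel}_+$ is causal in time. Since the back-propagating factor $f_-$ has, by construction, all its $z_t$-poles and zeros outside the unit circle, it is entirely absorbed in the conjugate (maximum phase in $z_t$) part of the cepstral decomposition and does not contribute to $f^{hel}_+$. Hence the forward-propagating solution $f_+$ is recovered (up to a constant/amplitude factor coming from the power-spectrum factorization) and $f_-$ is cancelled.

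The main obstacle I anticipate is not the cepstral manipulation itself, which follows almost verbatim from the previous results, but justifying the identification between physical causality in time and cepstral causality in the $n$-variable. One must argue that, for spectra arising from a propagation PDE, the factorization $F = F_+ F_-$ provided by spectral factorization coincides with the physical forward/backward split, i.e.\ that $F_+$ carries exactly the $|z_t|<1$ singularities and $F_-$ the $|z_t|>1$ ones. This is a consequence of the structure of the dispersion relation for wave equations and should be briefly stated rather than re-derived. A secondary subtlety is that the NSHP rotation at finite $M$ introduces a mild spatial tilt in the support, so the cancellation of $f_-$ is only asymptotic in $M$; this is already the content of Proposition~\ref{prop-asymp-equiv} and needs only to be restated in the present PDE context.
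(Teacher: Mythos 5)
Your proposal is correct and follows essentially the same route as the paper's proof: invoke Proposition~\ref{prop-asymp-equiv} for the asymptotic identification of the remapped helical minimum-phase factor with the semi-minimum phase term $f_{\oplus+}$, then Corollary~\ref{corol-MP-time} to conclude minimum phase in time, together with the observation that the back-propagating term is the time-reversed forward solution and hence the semi-maximum phase factor. The Z-domain identification you flag as the ``main obstacle'' (back-propagation carrying the $\lvert z_t\rvert>1$ singularities) is exactly what the paper asserts in one line and then verifies explicitly in \ref{app-propagative-systems}.
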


\begin{proof}
Since the back-propagating term coincides with the forward (causal) propagating solution $f_{\oplus +}$ reversed in time, it represents the semi-maximum phase component of the power spectral density, $f_{\ominus -}$. 
Now, thank to Proposition \ref{prop-asymp-equiv}, if the helix is constructed through periodization with respect to space, the minimum-phase term $\hel{f}_+$ will asymptotically correspond (for large $M$ and after remapping to the $2$D space), to the semi-minimum phase term of the two factor decomposition $f_{\oplus +}$. Thus, according to Corollary \ref{corol-MP-time}, it will be minimum phase with respect to time and approximate the forward propagating solution. 
\end{proof}

In the frequency domain, the extraction of the semi-minimum phase solution $f_{\oplus +}$ is equivalent to applying an all-pass phase-only filter to the data. This is consistent with \cite{Gari79:ieeetsp}, where the boundary condition (at the surface) of the wave equation needs to cancel the back propagating solution through convolution with a semi-causal filter.

In order to illustrate Proposition \ref{prop-back-prop} with a straigthforward example, we make the assumption of an homogeneous medium, with constant propagation speed $c$. However, we want to integrate the viscosity $\alpha$ and absorbance  $\beta$ of the medium into the $1$D wave equation (this translates into attenuation of waves in space and time):
\begin{equation}\label{eq-wave-eq-visc-abs}
c^2 \frac{\partial^2 \Phi}{\partial x^2} = \frac{\partial^2 \Phi}{\partial t^2} + \alpha \frac{\partial \Phi}{\partial x} + \beta \frac{\partial \Phi}{\partial t}  
\end{equation}
The general solution of (\ref{eq-wave-eq-visc-abs}) is expressed by
$$
\Phi(x,t) = \mathcal{F}(x-ct) + \mathcal{G}(x+ct)
$$
For a plane wave equation corresponding to eigenmode $\omega$, this yields
$$
f_\omega(x,t) = A_0 e^{i (k x - \omega t)}e^{-\alpha x} e^{-\beta t} + B_0 e^{i (k x + \omega t)}e^{-\alpha x} e^{\beta t}
$$
The causal solution is embedded in the first term ($B_0=0$). After sampling with periods $\Delta$ for space and $T$ for time, the continuous and discrete causal solutions of (\ref{eq-wave-eq-visc-abs}) take the expressions 
\begin{equation}\label{eq-propag-wave-2D}
\begin{cases}
f(x,t) = A_0 e^{-\alpha x} e^{-\beta t}  e^{i k x} e^{-i \omega t}   \\
f(m,n) = A_0  e^{-\alpha m \Delta} e^{-\beta n T} e^{i k m \Delta} e^{-i \omega n T} 
\end{cases}
\end{equation}
The attenuated propagating wave in (\ref{eq-propag-wave-2D}) can be considered as the impulse response of the propagative system: $f(m,n) = \delta(m,n) \ast h(m,n) = h(m,n)$.
\ref{app-propagative-systems} details the computation of the poles and zeros of the Z-transform of the causal solution of the wave equation, and discusses the effects of the helical mapping in the Z domain. Furthermore, \ref{app-propagative-systems} shows how the back-propagating solution of (\ref{eq-wave-eq-visc-abs}) corresponds to the minimum phase term, reversed in time. 

\section{Application to physical systems} \label{sec-application}
Helical coordinates have been used in helioseismology \cite{RickC00:SP} for the estimation of a minimum phase impulse response. More generally, physical environments involving the propagation of waves, like the interior of the sun for helioseismology or the Earth volume for passive seismic, can be represented as convolutive systems \cite{TakaNFDSLRT12:ieeespm}.
Simulated data are generated by a convolutive propagative system 
$
d(x,t) = s(x,t) \ast h(x,t)
$
where $s(x,t)$ refers to the excitation signature and $h(x,t)$ to the impulse response. 
The FT of the data matrix is then expressed as the product
$
D(k_x,\omega) = S(k_x,\omega) \: H(k_x,\omega)
$.
In the examples presented in this paper,  we aim at estimating the acoustic impulse response of the Sun, $h(x,t)$, including internal reverberations. We make the assumptions that seismic excitations $s(x,t)$, generated by small  \textit{sunquakes}, are uncorrelated in space and time, so that the  power spectral density of $d(x,t)$, $\lvert S_d(\omega_x, \omega) \lvert^2$, equals $ \lvert H(\omega_x,\omega) \lvert^2$ up to a scale factor, and that $h$ is semi-minimum phase.  In \ref{app-algo} we detail the two algorithms used for comparisons: on one hand the $d$D spectral factorization (Algorithm \ref{algo-2D-sp-fact-procedure}), on the other hand the helical spectral factorization (Algorithm \ref{algo-helical-procedure}). 

Figure \ref{fig-2D-deconv} (a) and (b) show simulated data for $M=N=1024$ and the impulse response of the system, modeled as a Ricker wavelet \cite{GholK14:ieeescies}:
$$
h(x,t) \propto \frac{1}{\sqrt{2\pi}\sigma^2} \left\{1- \frac{[t-\tau(x)]^2}{\sigma^2}  \right\} e^{- \frac{[t-\tau(x)]^2}{2\sigma^2} },
$$
where $\tau(x) = \sqrt{x^2 + R^2}/v$,
and $R$ indicates the distance of the source. Temporal and spatial sampling periods are fixed at $20 ms$ and $5 m$, and $\sigma = 0.01$. The random excitation is modeled as a Gaussian white noise  with unit variance, in both dimensions: $s \sim \mathcal{N}(0,\mathbb{I})$.

Figures \ref{fig-error-Dirac} (a) and \ref{fig-error} (a) show the estimated impulse responses $\hat{h}$ through the helical procedure described in Algorithm \ref{algo-helical-procedure}; Figures \ref{fig-error-Dirac} (b) and \ref{fig-error} (b) show the distribution of the estimation error with respect to the true impulse response $h$. 

Figure \ref{fig-MSE-approx} (a) shows the total approximation error $e^{tot} = \lvert\lvert \hel{f}_+(p) - \hel{f_+(m,n)} \lvert\lvert^2$ of the helical minimum phase solution (Algorithm \ref{algo-helical-procedure}) with respect to the $2$D semi-minimum phase solution (Algorithm \ref{algo-2D-sp-fact-procedure}), as a decreasing function of the number of space samples $M$. This can be interpreted as a confirmation of the asymptotic equivalence of the two solutions stated in Prop. \ref{prop-asymp-equiv}.
Another measure of the quality of the approximation is expressed by the correlation coefficient $R$ between the two solutions, illustrated in Figure \ref{fig-MSE-approx} (b) as an increasing function of the number of space samples $M$.

The algorithm was then applied to $3$D solar data in Figure \ref{fig-solar-cube} (for more information about the experimental setup, cf. \cite{Solar_data}), and the Sun impulse response was estimated through helical spectral factorization (in Figure \ref{fig-solar-estimation} (a) and (b), we present our results for a given time instant and space location). Figure \ref{fig-solar-Pearson} shows the correlation coefficient between the helical and the $3$D solutions, as a function of the number of samples along the $x$ and $y$-axes, and along the time axis.
The estimated impulse response with multiple reflections is consistent with results in  \cite{RickC00:SP} and it seems to be related to a propagative system where seismic waves are reverberated at least three times within the Sun (cf. Figure \ref{fig-solar-estimation} (b)).

\section*{Conclusion}
This paper gives a theoretical foundation to the relevance of helical boundary conditions,  \textit{i.e.} a generalization of the vectorization of a multidimensional array, for the spectral factorization problem. Effects of this representation are detailed in the cepstral domain and in the Z domain, and the proposed technique is then illustrated through an example of blind deconvolution for a propagative system, and an application to helioseismology.

\section*{Aknowledgments}
This work has been partially supported by the ERC grant 2013-320594  ``DECODA''.

\section*{Figures}

\begin{figure}[h!]
\centering
	\subfloat[Data $d(x,t) = h(x,t) \ast s(x,t)$]{\includegraphics[scale=0.30]{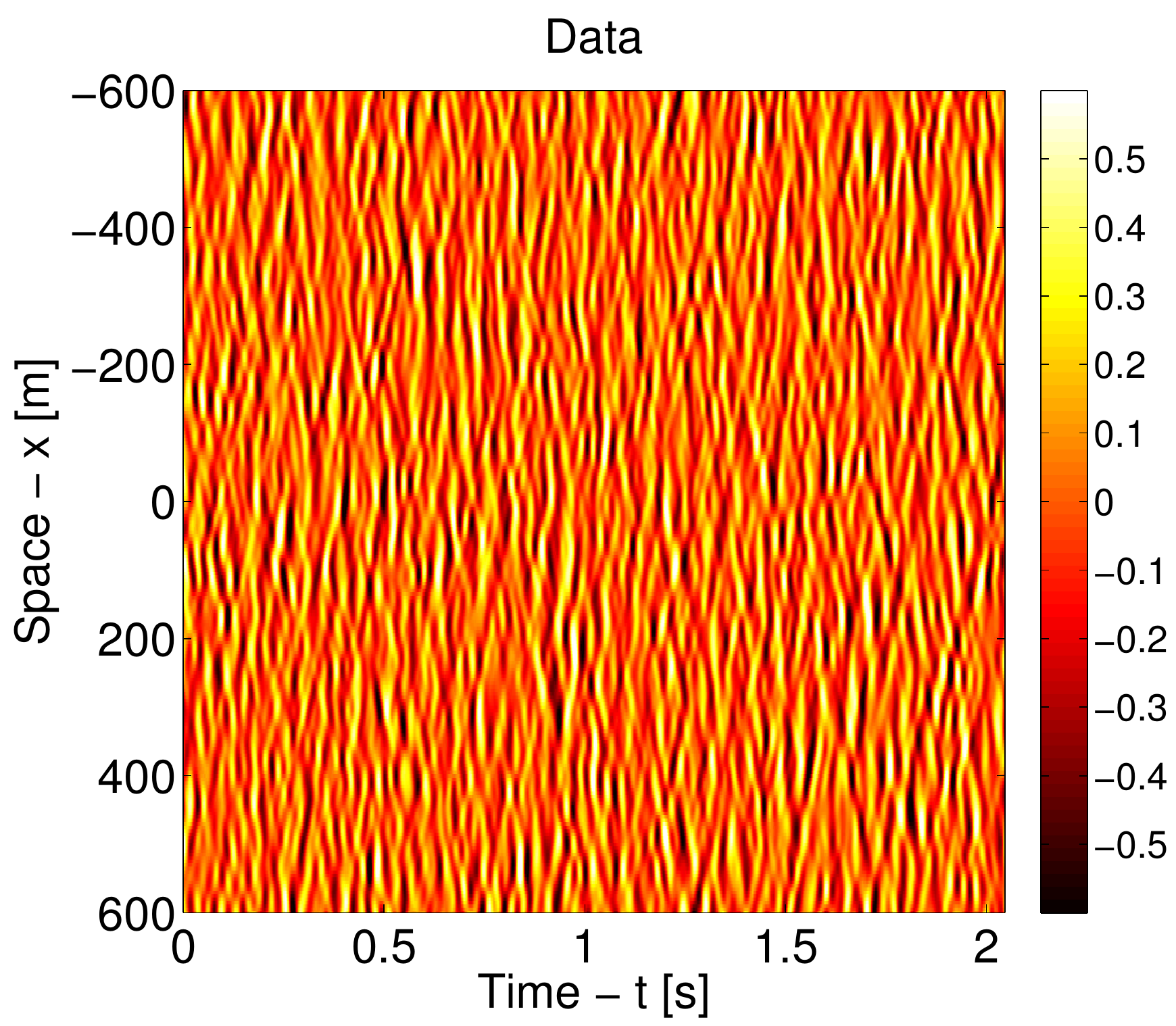}}
	\hspace{0.5em}
	\subfloat[Impulse response $h(x,t)$]{\includegraphics[scale=0.30]{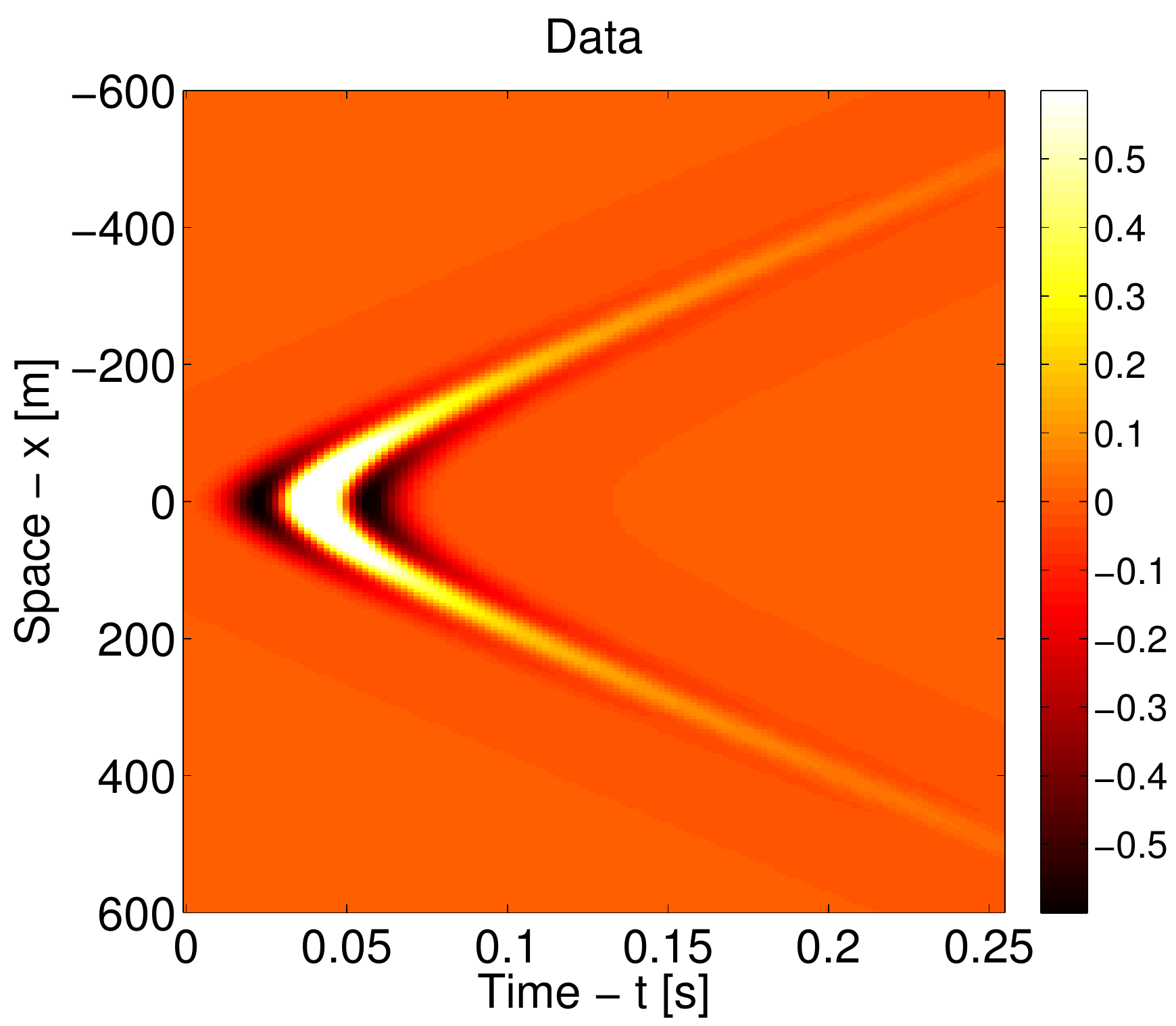}	\label{fig-quarter-R--}}
\caption{Simulated 2D data and impulse response in the plane $(x,t)$}\label{fig-2D-deconv}
\end{figure}

\begin{figure}[h!]
\centering
	\subfloat[$\hat{h}(x,t)$]{\includegraphics[scale=0.30]{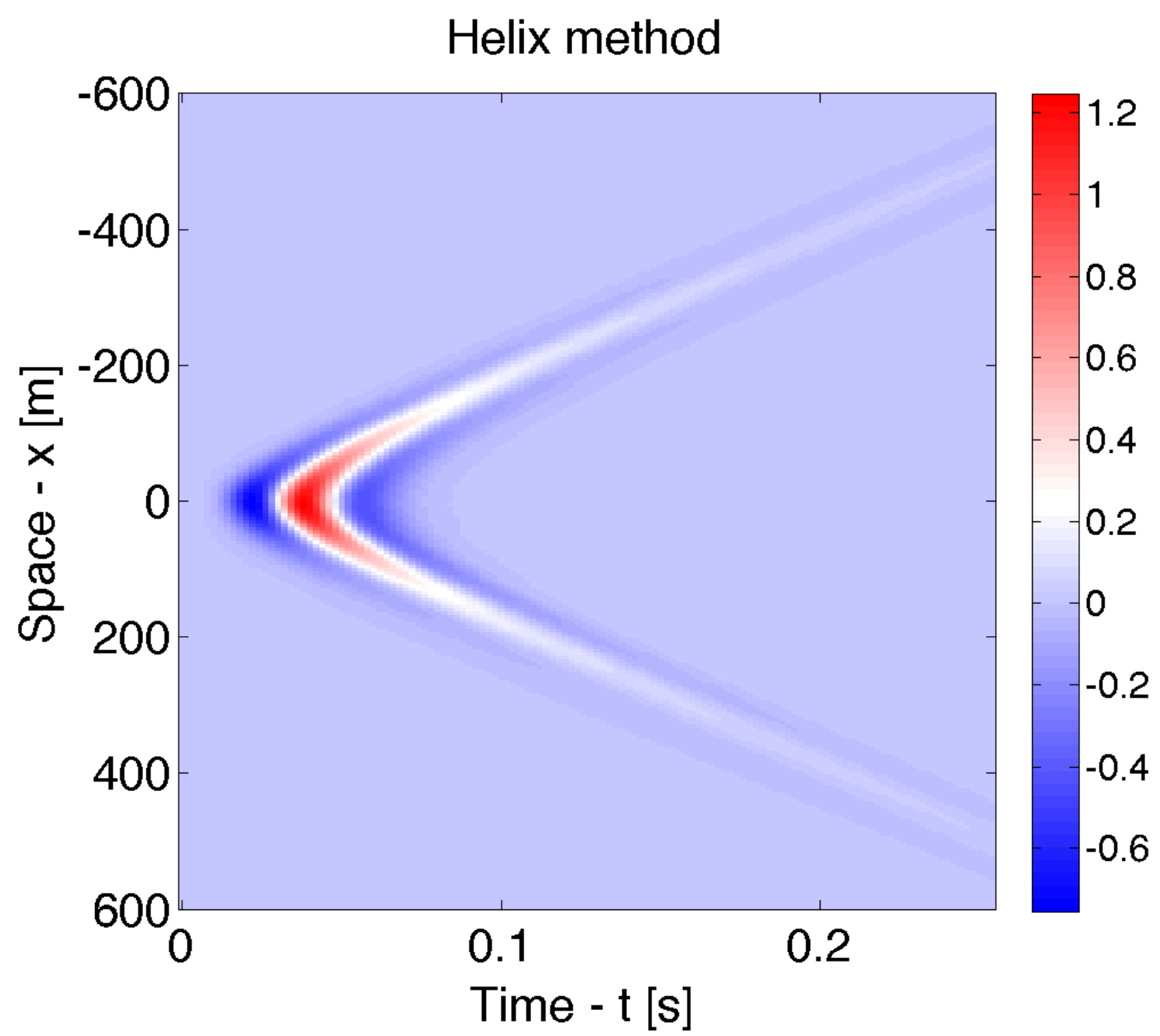}}
	\hspace{0.5em}
	\subfloat[Distribution of the error]
	{\includegraphics[scale=0.30]{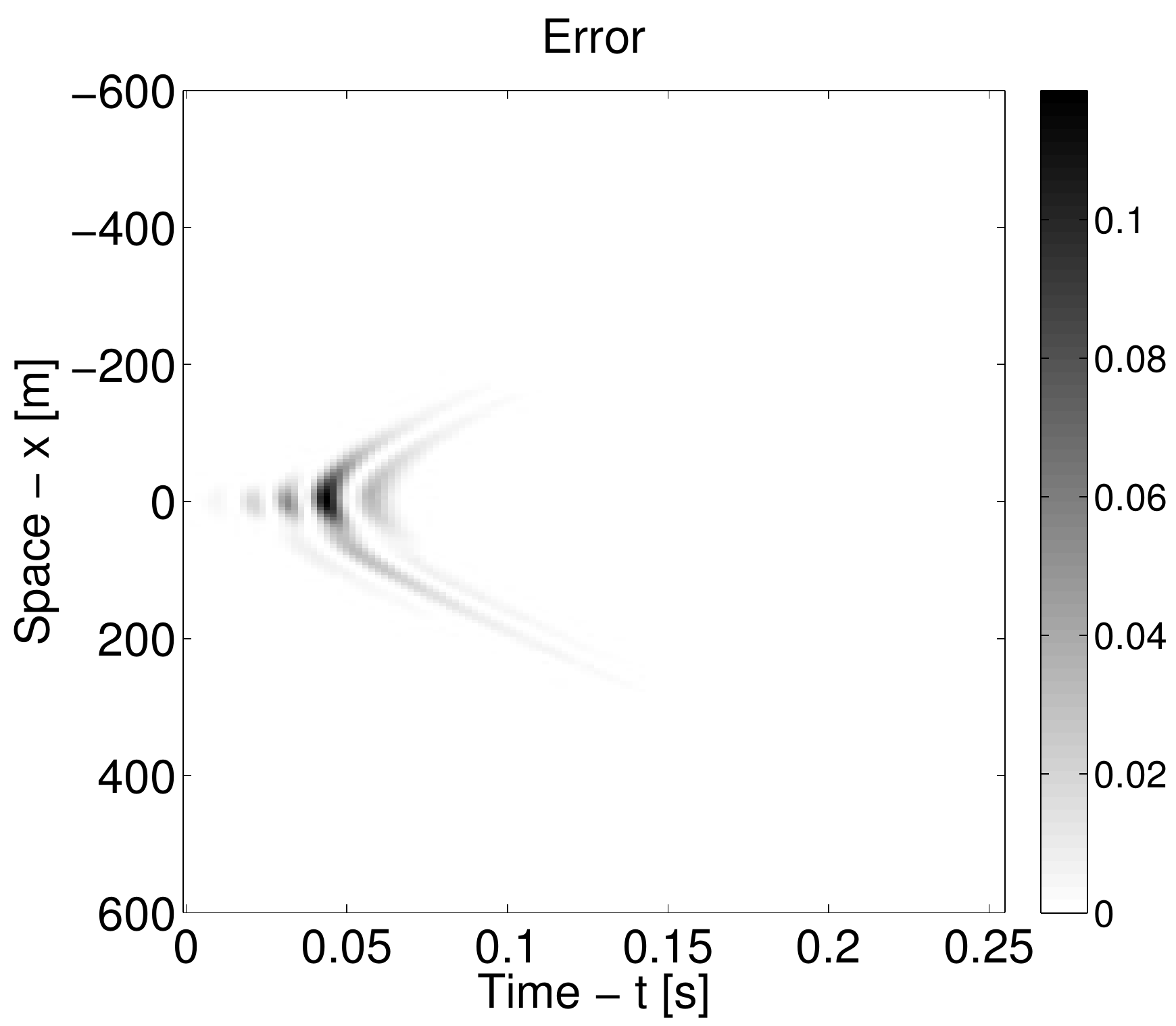}}
\caption{Estimation of the impulse response - only one Dirac source $\delta(x,t)$}\label{fig-error-Dirac}
\end{figure}

\begin{figure}[h!]
\centering
	\subfloat[$\hat{h}(x,t)$]{\includegraphics[scale=0.30]{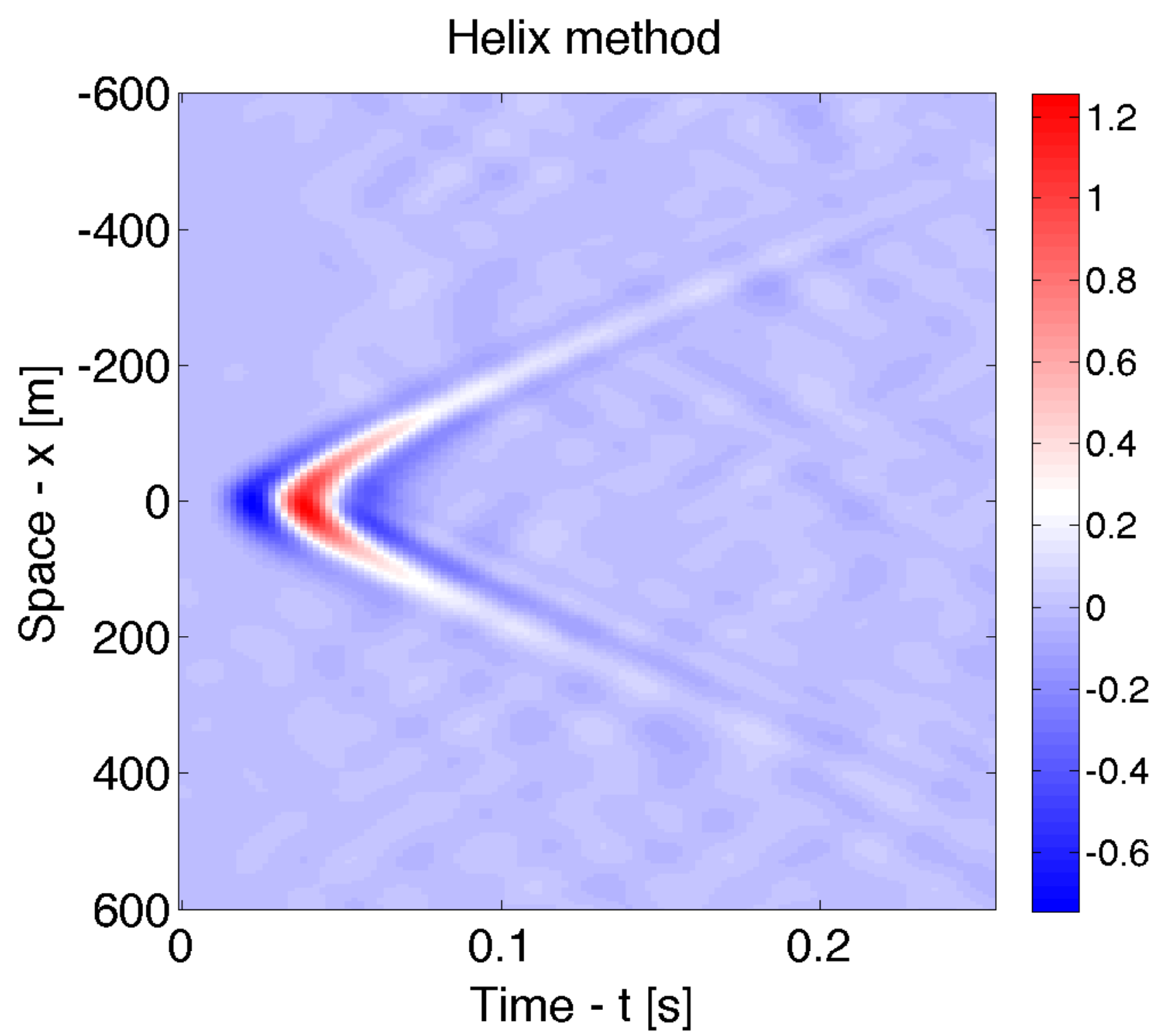}}
	\hspace{0.5em}
	\subfloat[Distribution of the error]
	{\includegraphics[scale=0.30]{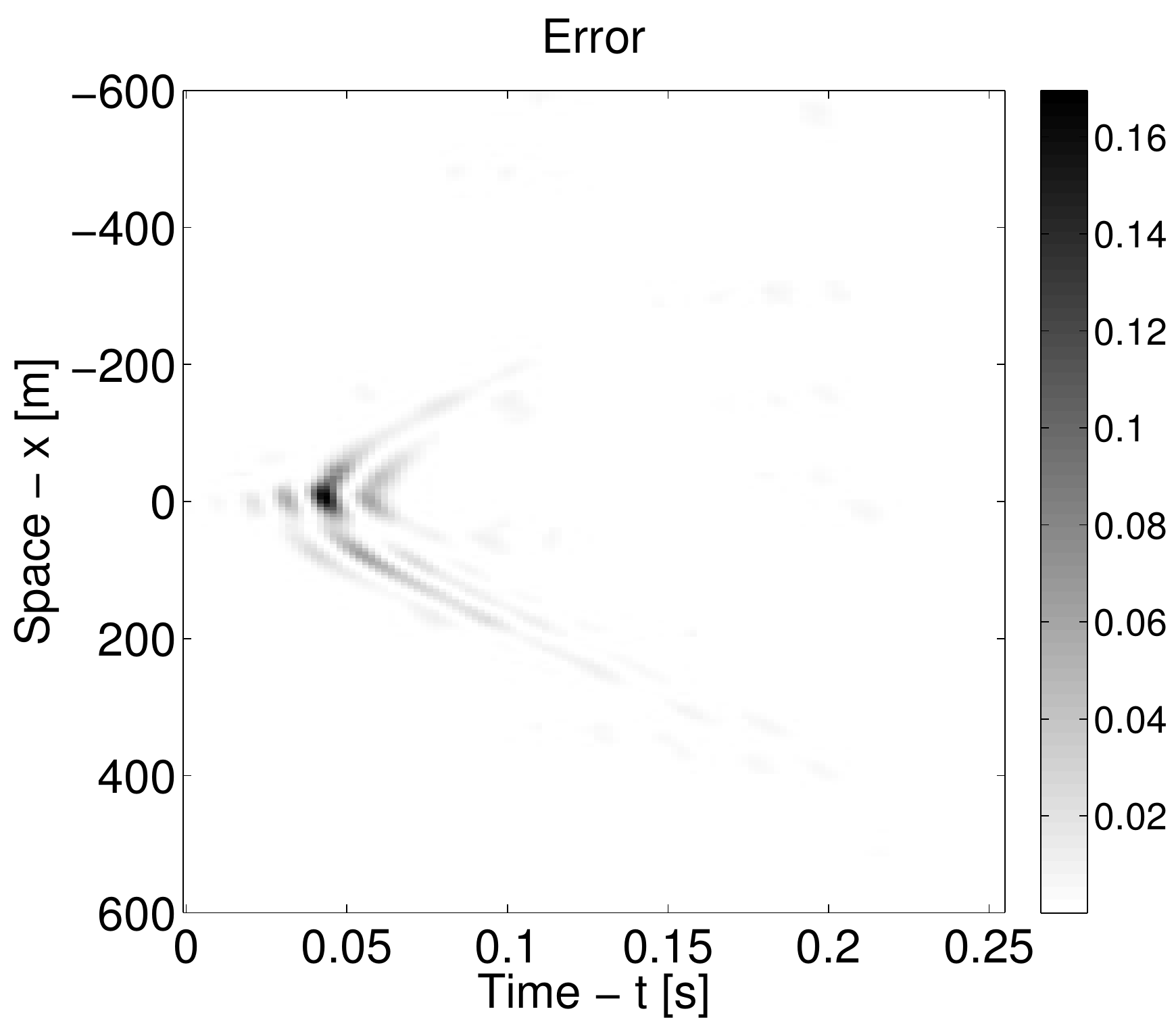}}
\caption{Estimation of the impulse response - random excitation $s(x,t)$}\label{fig-error}
\end{figure}

\begin{figure}[h!]
\centering
	\subfloat[Total approximation error vs $M$]{\includegraphics[scale=0.6]{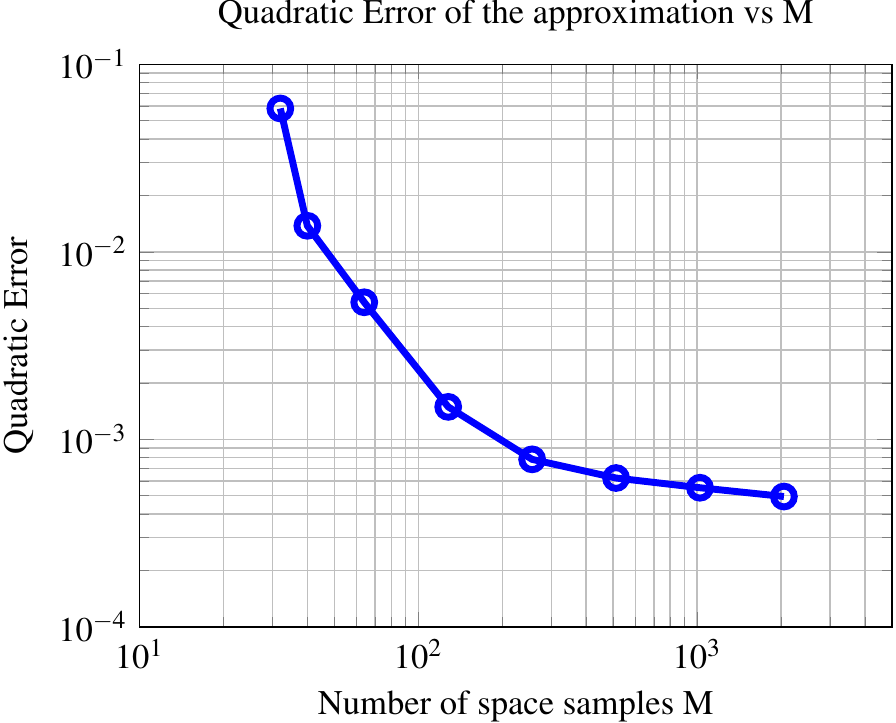}}\\
	\subfloat[Pearson correlation vs M]{\includegraphics[scale=0.6]{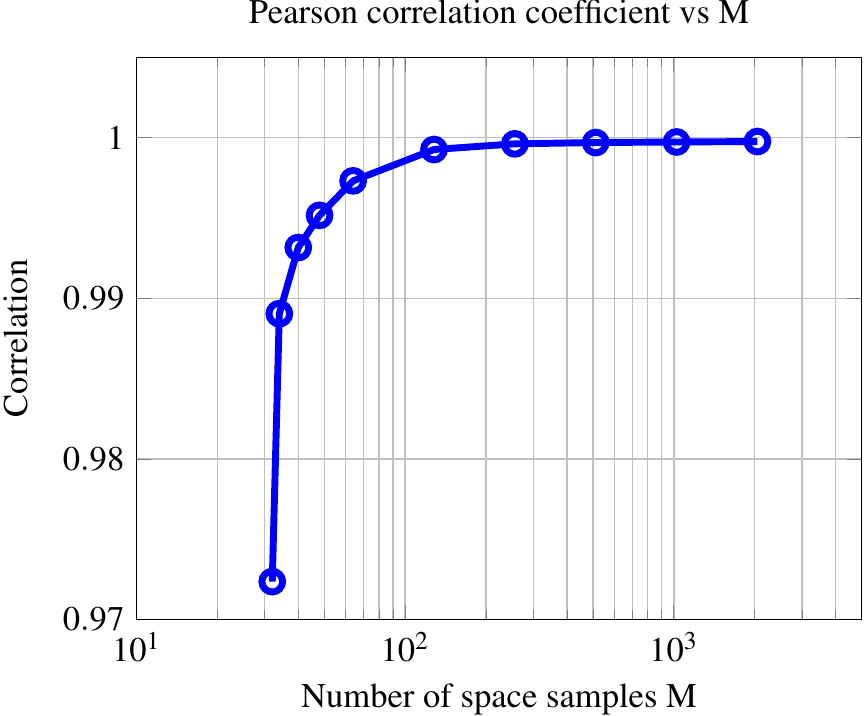}}
\caption{Approximation error and correlation with respect to the $2$D solution}\label{fig-MSE-approx}
\end{figure}

\begin{figure}[h!]
\centering
	\includegraphics[scale=0.37]{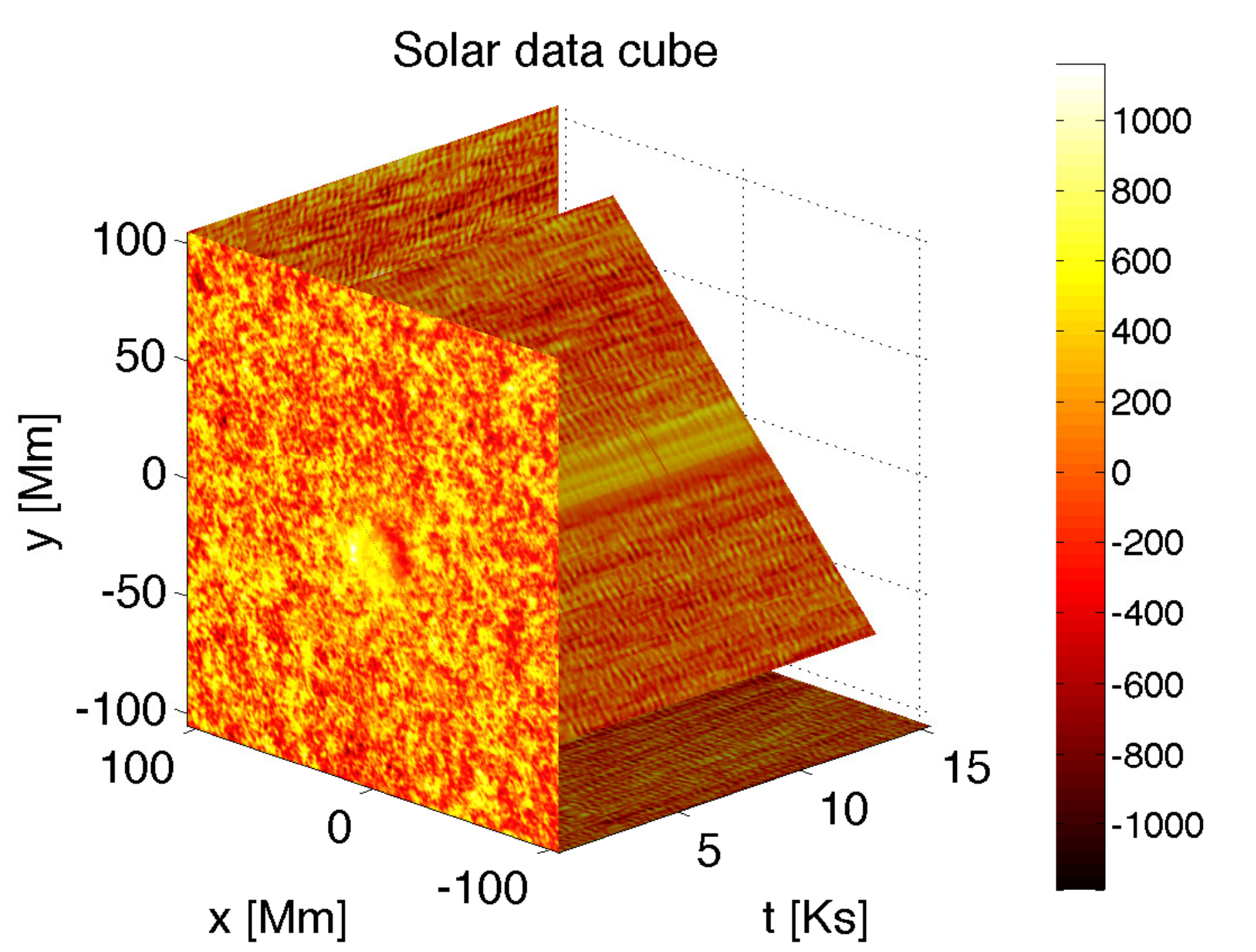}
	\caption{Solar Data Volume (Courtesy by Jon Claerbout, Stanford University)}\label{fig-solar-cube}
\end{figure}

\begin{figure}[h!]
\centering
	\subfloat[Impulse response for $t_0 = 1.8 ks$]
	{\includegraphics[scale=0.33]{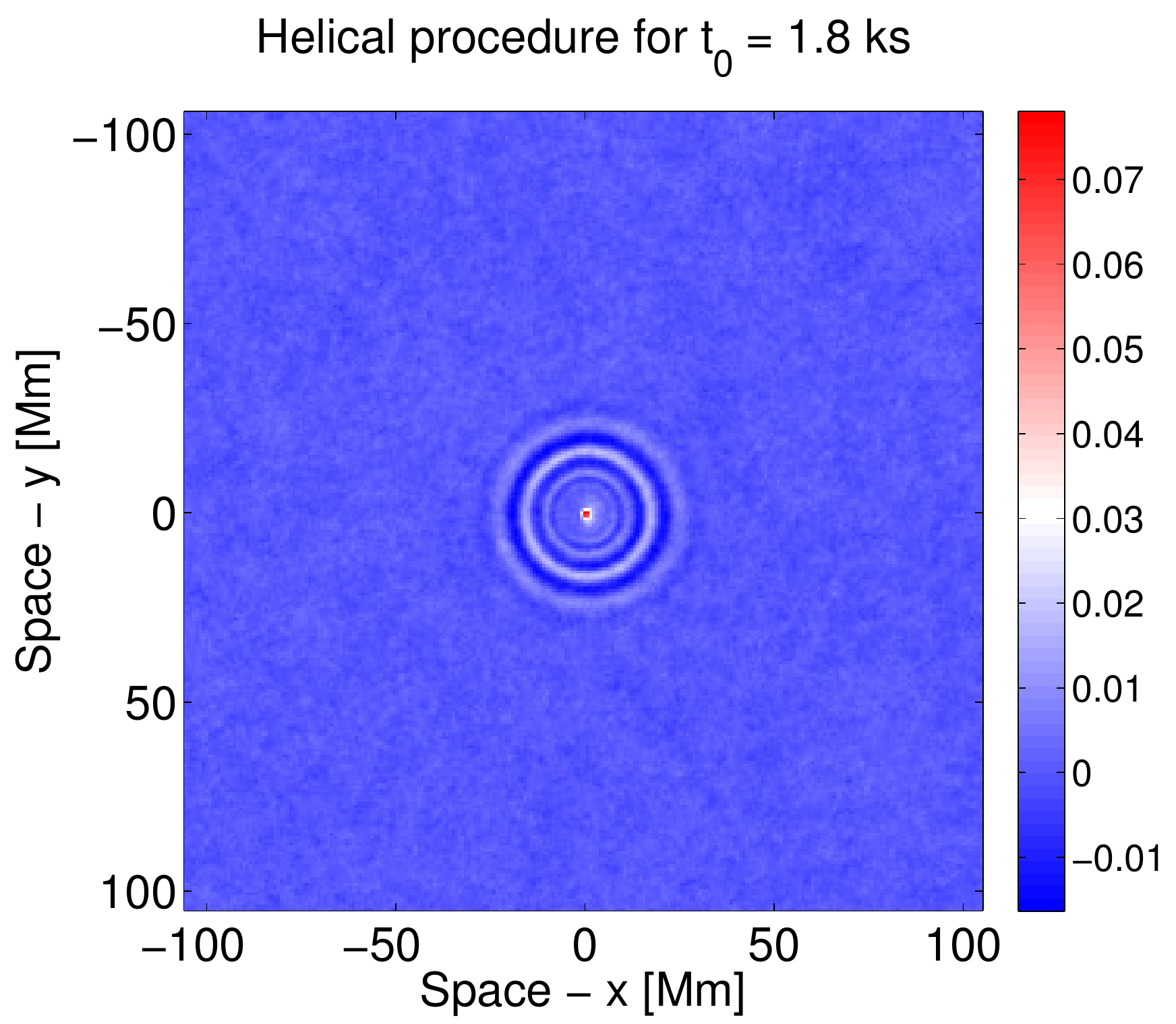}}\\
	\subfloat[Impulse response for $y_0 = -3.3 Mm$]
	{\includegraphics[scale=0.33]{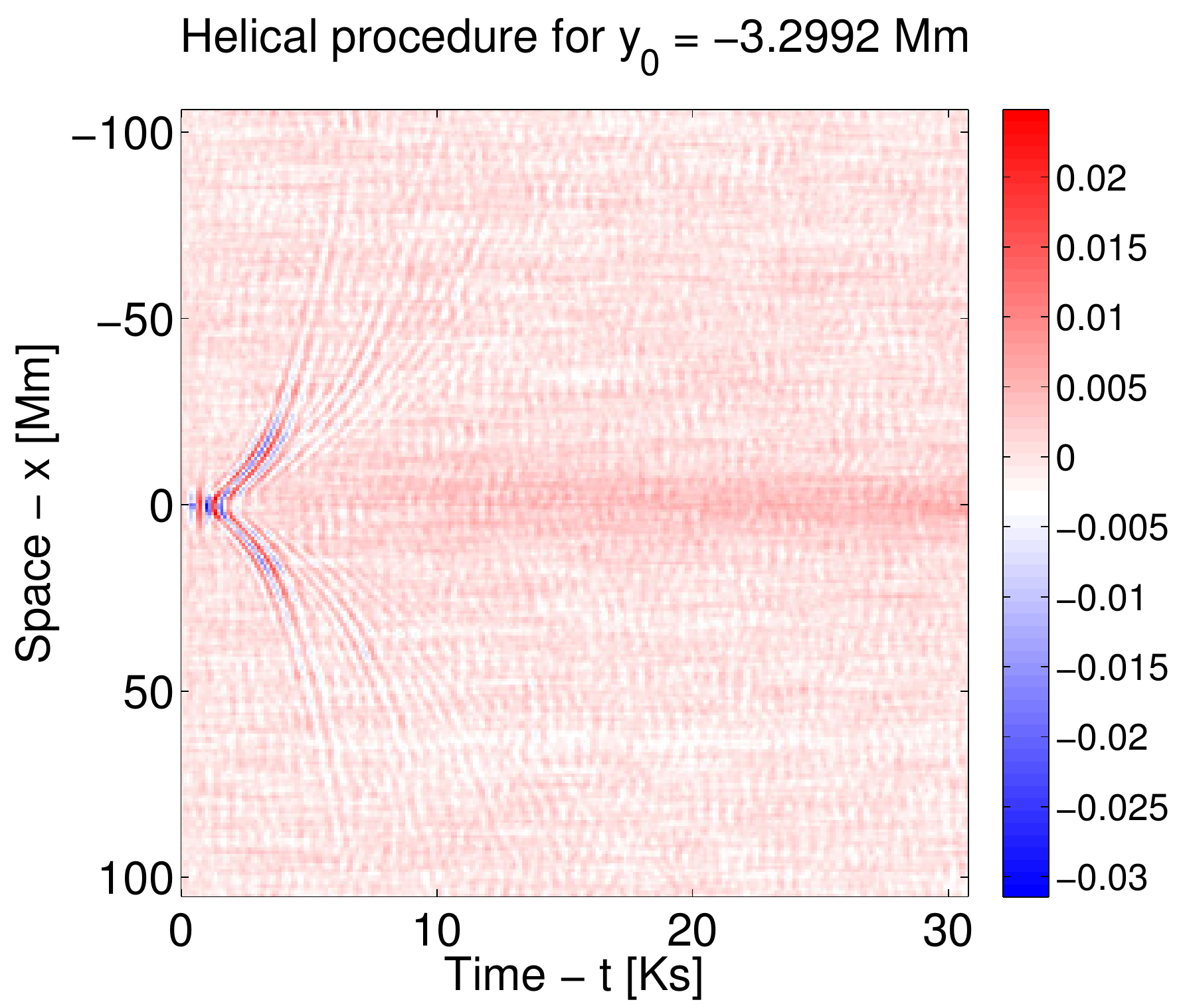}}
\caption{Estimation of the impulse response of the reverberations of the Sun.}\label{fig-solar-estimation}
\end{figure}

\begin{figure}[h!]
\centering
	\includegraphics[scale=0.55]{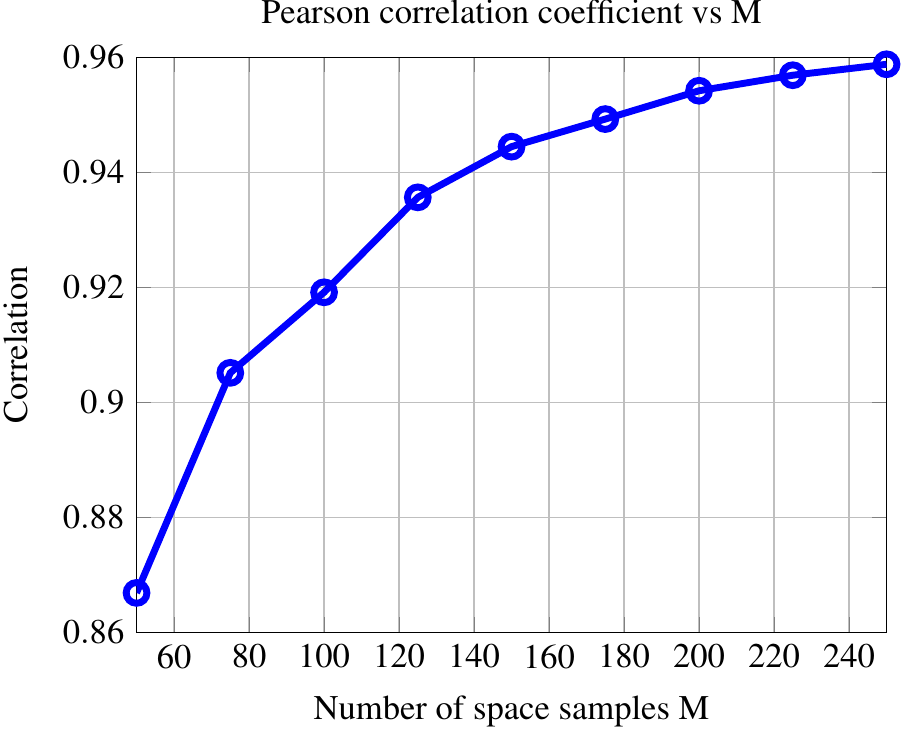}
\caption{Pearson correlation between the helical and the $3$D solutions vs M.}\label{fig-Sun}\label{fig-solar-Pearson}
\end{figure}

\clearpage
\appendix

\section{Proof for propagative systems}\label{app-propagative-systems}
We calculate the Z-transform of the forward propagating solution in (\ref{eq-propag-wave-2D}), $f(m,n) = A_0  e^{-\alpha m \Delta} e^{-\beta n T} e^{i k m \Delta} e^{-i \omega n T}$. If we have $m, n \in \mathbb{N}$, the result is simple:
\begin{equation}\nonumber
\begin{aligned}
&F(z_1, z_2) = \sum_{m=0}^{\infty} \sum_{n=0}^{\infty} f(m,n) z_1^{-m} z_2^{-n} = \\
&= A_0 \sum_{m=0}^{\infty} e^{-\alpha m \Delta} e^{i k m \Delta} z_1^{-m} \sum_{n=0}^{\infty} e^{-\beta n T} e^{-i \omega n T} z_2^{-n} \\
&= A_0 \frac{1}{1 - e^{-\alpha \Delta} e^{i k \Delta} z_1^{-1}} \frac{1}{1 - e^{-\beta T}e^{-i \omega T} z_2^{-1}}
\end{aligned}
\end{equation}
The convergence zone is given by
\begin{equation}\nonumber
\begin{cases}
\lvert e^{-\alpha \Delta} e^{i k \Delta} z_1^{-1} \lvert < 1  \Rightarrow  \lvert z_1 \lvert > e^{-\alpha \Delta} \\
\lvert e^{-\beta T} e^{-i \omega T} z_2^{-1} \lvert < 1  \Rightarrow  \lvert z_2 \lvert >  e^{-\beta T} 
\end{cases}
\end{equation}
The poles of $f(m,n)$ are given by
\begin{equation}
\begin{cases}
z_1 = e^{-\alpha \Delta} e^{i k \Delta}  \\
z_2 = e^{-\beta T}  e^{-i \omega T} 
\end{cases}
\end{equation}
If $0 \leq m \leq M-1$, numerator has supplementary zeros:
\begin{equation}\nonumber
{z_1}_k = e^{-\alpha \Delta} e^{i k \Delta + k \: 2 \pi /M}, \: 0 \leq k \leq M
\end{equation}
If $0 \leq n \leq N-1$, numerator has supplementary zeros:
\begin{equation}\nonumber
{z_2}_k = e^{-\beta T} e^{-i \omega T + k \: 2 \pi /N}, \: 0 \leq k \leq N
\end{equation}
Notice that $f(m,n)$ is strictly $2$D minimum phase, as its poles and zeros lie inside the unit bicircle. The back-propagating term would not be minimum phase with respect to time, due to poles $z^-_2 = e^{\beta T}  e^{i \omega T}$ with $\lvert z^-_2 \lvert >1$, and zeros ${z_2}^-_k = e^{\beta T } e^{i \omega T + k 2 \pi /N}$ with $\lvert {z_2}^-_k \lvert >1, \forall k$.

On the other hand, provided $0 \leq m \leq M-1$, after the helical mapping in (\ref{eq-helix-2}), the $1$D Z-transform of the helix $\hel{f}(p)$ is given by 
\begin{equation}\nonumber
\begin{aligned}
&\hel{F}(z) = \sum_{p=0}^{\infty} \hel{f}(p) z^{-p} = F(z, z^M) = \\
&= \sum_{m=0}^{M-1} \sum_{n=1}^{\infty} f(m,n) z^{-m} z^{-Mn} = \\
&= A_0 \sum_{m=0}^{M-1} e^{-\alpha m \Delta} e^{i k m \Delta} z^{-m} \sum_{n=1}^{\infty}  e^{-\beta n T} e^{-i \omega n T}  z^{-Mn}\\
&= A_0 \frac{1-e^{-\alpha \Delta M} e^{i k \Delta M} z^{-M}}{1-e^{-\alpha \Delta} e^{i k \Delta} z^{-1}}\frac{1}{1-e^{-\beta T}e^{-i \omega T} z^{-M}}
\end{aligned}
\end{equation}
The convergence zone for the Z-transform is $\{\lvert z \lvert >  e^{-\alpha \Delta},  \lvert z \lvert >  \sqrt[M]{e^{-\beta T}} \}$.
The poles are given by
\begin{equation}\nonumber
\begin{cases}
z = e^{-\alpha \Delta} e^{i k \Delta} \\
z_k = \sqrt[M]{e^{-\beta T}} e^{-i \omega T/M + k \: 2 \pi /M}, 0 \leq k \leq M
\end{cases}
\end{equation}
and the zeros by
\begin{equation}\nonumber
z_k = e^{-\alpha \Delta} e^{i k \Delta + k \: 2 \pi /M}, 0 \leq k \leq M
\end{equation}
If $0 \leq n \leq N-1$, numerator has supplementary zeros:
\begin{equation}\nonumber
z_k = \sqrt[M]{e^{-\beta T }} e^{-i \omega T / M+ k \: 2 \pi /(MN)}, 
0 \leq k \leq MN 
\end{equation}
Notice that $\hel{f}(p)$ is minimum phase, as its poles and zeros lie inside the unit circle. After helical mapping, the back-propagating term would not be minimum phase with respect to time, due to poles $z^-_k = \sqrt[M]{e^{\beta T}} e^{i \omega T/M + k \: 2 \pi /M}$ with $\lvert z^-_k \lvert >1$, and zeros $z^-_k = \sqrt[M]{e^{\beta T}} e^{i \omega T / M+ k \: 2 \pi /(MN)}$ with $\lvert z^-_k \lvert >1, \forall k$.

\section{Algorithms}\label{app-algo}
\begin{algorithm}[h]
\caption{$2$D spectral factorization}
\label{algo-2D-sp-fact-procedure}
\begin{algorithmic}[1]

\STATE Calculate $2$D spectrum $S(k,l), \: 0 \leq k \leq M-1, \: 0 \leq l \leq N-1$:

\begin{equation}\nonumber
S(k,l) = \left \lvert \sum_{m=0}^{M-1} \sum_{n=0}^{N-1} s(m,n) \: e^{-2 \pi m k /M} \: e^{- 2 \pi n l /N} \right \lvert^2
\end{equation}

\STATE Calculate $2$D complex cepstrum 

\begin{equation}\nonumber
\cep{s}(m,n) = \frac{1}{MN} \sum_{k=0}^{M-1} \sum_{l=0}^{N-1}\log S(k,l) \: e^{2 \pi m k /M}\: e^{2 \pi n l /N}
\end{equation}

\STATE Project the cepstrum onto the upper NSHP 
\newline
$
\mathcal{R}_{\oplus+} = \{m \geq 0, n \geq 0\} \cup \{m < 0, n > 0\}
$:
\begin{small}
\begin{equation}\nonumber
\cep{s}_+(m,n) = 0 \text{ for } (m,n) \in \mathcal{R}_{\ominus-} \coloneqq \{m \leq 0, n \leq 0\} \cup \{m > 0, n < 0\}
\end{equation}
\end{small}

\STATE Perform the inverse homomorphic transform on the semi-causal cepstrum to find the $2$D semi-minimum phase component:
$
s_{\oplus+}(m,n) = \mathcal{H}^{-1} [\cep{s}_{\oplus+}(m,n))]
$
\end{algorithmic}
\end{algorithm}

\begin{algorithm}[h]
\caption{Helical spectral factorization}
\label{algo-helical-procedure}
\begin{algorithmic}[1]
\STATE Vectorize data $\hel{s}(p)=f(m,n)$ column-wise, 
\newline
with $p = m+Mn, \: 0 \leq p \leq MN-1$

\STATE Calculate $1$D spectrum $\hel{S}(k), \: 0 \leq k \leq MN-1$:
$$
\hel{S}(k) = \left\lvert \sum_{p=0}^{MN-1} \hel{s}(p) \: e^{-2 \pi p k /(MN)} \right\lvert^2
$$

\STATE Calculate $1$D complex cepstrum 

\begin{equation}\nonumber
\hel{\cep{s}}(p) = \frac{1}{MN} \sum_{k=0}^{MN-1} \log S(k) \: e^{2 \pi p k /(MN)} 
\end{equation}

\STATE Project the cepstrum onto the $1$D causal admissible region $\{p \geq 0\}$:
$
\hel{\cep{s}}_+(p) = 0 \text{ for } p<0
$
\STATE Perform the inverse homomorphic transform on the causal cepstrum to find the $1$D minimum phase component:
$
\hel{s}_+(p) = \mathcal{H}^{-1} [\hel{\cep{s}}_+(p)]
$
\STATE Back project the helical minimum phase solution to the $2$D domain:
$
s^{\, helix}_+(m,n) = \hel{s}_+(m + Mn)
$
\end{algorithmic}
\end{algorithm}

\clearpage
\section*{References}

\bibliographystyle{elsarticle-num}
\bibliography{biblio}

\begin{thebibliography}{10}
\expandafter\ifx\csname url\endcsname\relax
  \def\url#1{\texttt{#1}}\fi
\expandafter\ifx\csname urlprefix\endcsname\relax\def\urlprefix{URL }\fi
\expandafter\ifx\csname href\endcsname\relax
  \def\href#1#2{#2} \def\path#1{#1}\fi

\bibitem{Gari79:ieeetsp}
G.~Garibotto, {2-D} recursive phase filters for the solution of two-dimensional
  wave equations, IEEE Trans. Acoust., Speech, Signal Process. 27~(4) (1979)
  367--373.

\bibitem{DonnNS08:ieeetsp}
D.~Donno, A.~Nehorai, U.~Spagnolini, Seismic velocity and polarization
  estimation for wavefield separation, IEEE Trans. Signal Process. 56~(10)
  (2008) 4794--4809.

\bibitem{RickC00:SP}
J.~E. Rickett, J.~F. Claerbout, Calculation of the sun's acoustic impulse
  response by multi-dimensional spectral factorization, Solar Physics 192~(1-2)
  (2000) 203--210.

\bibitem{TaxtF99:ieeetuffc}
T.~Taxt, G.~V. Frolova, Noise robust one-dimensional blind deconvolution of
  medical ultrasound images, IEEE Trans. Ultrason., Ferroelect., Freq. Control
  46~(2) (1999) 291--299.

\bibitem{taxtS01:ieeetuffc}
T.~Taxt, J.~Strand, Two-dimensional noise-robust blind deconvolution of
  ultrasound images, IEEE Trans. Ultrason., Ferroelect., Freq. Control 48~(4)
  (2001) 861--866.

\bibitem{TakaNFDSLRT12:ieeespm}
A.~K. Takahata, E.~Z. Nadalin, R.~Ferrari, L.~T. Duarte, R.~Suyama, R.~R.
  Lopes, J.~M.~T. Romano, M.~Tygel, Unsupervised processing of geophysical
  signals: A review of some key aspects of blind deconvolution and blind source
  separation, IEEE Signal Process. Mag. 29~(4) (2012) 27--35.

\bibitem{SarpC02:ieeeicdsp}
S.~Sarpal, E.~Chilton, Arma modelling based on root cepstral deconvolution, in:
  Proc. IEEE Int. Conf. Digit. Signal Process, Vol.~2, 2002, pp. 745--748.

\bibitem{Kizi07:ieeetcs}
A.~Kizilkaya, On the parameter estimation of 2-d moving average random fields,
  IEEE Trans. Circuits Syst. II, Exp. Briefs 54~(11) (2007) 989--993.

\bibitem{DudgM84:PHSPS}
D.~E. Dudgeon, R.~M. Mersereau, Multidimensional digital signal processing,
  Englewood Cliffs: Prentice-Hall, 1984.

\bibitem{OppeSS68:ieee}
A.~V. Oppenheim, R.~W. Schafer, T.~G. Stockham, {Nonlinear Filtering of
  Multiplied and Convolved Signals}, Proc. IEEE 56~(8) (1968) 1264--1291.

\bibitem{OppeS04:ieeespm}
A.~V. Oppenheim, R.~W. Schafer, From frequency to quefrency: A history of the
  cepstrum, IEEE Signal Process. Mag. 21~(5) (2004) 95--106.

\bibitem{EkstW76:ieeetassp}
M.~P. Ekstrom, J.~W. Woods, Two-dimensional spectral factorization with
  applications in recursive digital filtering, IEEE Trans. Acoust., Speech,
  Signal Process. 24~(2) (1976) 115--128.

\bibitem{ChanA78:ieeetcs}
H.~Chang, J.~K. Aggarwal, Design of two-dimensional semicasual recursive
  filters, IEEE Trans. Circuits Syst. 25~(12) (1978) 1051--1059.

\bibitem{YangL07:ieeetsp}
Y.-H. Yang, J.-H. Lee, Design of {2-D} recursive digital filters using
  nonsymmetric half-plane allpass filters, IEEE Trans. Signal Process. 55~(12)
  (2007) 5604--5618.

\bibitem{PeiL11:ieeetcs}
S.-C. Pei, H.-S. Lin, Two-dimensional partially differential cepstrum and its
  applications on filter stabilization and phase unwrapping, IEEE Trans.
  Circuits Syst. I, Reg. Papers 58~(4) (2011) 737--745.

\bibitem{GoodE80:ieeetac}
D.~M. Goodman, M.~P. Ekstrom, Multidimensional spectral factorization and
  unilateral autoregressive models, IEEE Trans. Autom. Control 25~(2) (1980)
  258--262.

\bibitem{MersD74:ieeetsp}
R.~M. Mersereau, D.~E. Dudgeon, The representation of two-dimensional sequences
  as one-dimensional sequences, IEEE Trans. Acoust., Speech, Signal Process.
  22~(5) (1974) 320--325.

\bibitem{Como14:ieeespm}
P.~Comon, Tensors: a brief introduction, IEEE Signal Process. Mag. 31~(3)
  (2014) 44--53.

\bibitem{Dudg77:ieeetassp}
D.~E. Dudgeon, {The Computation of Two Dimensional Cepstra}, IEEE Trans.
  Acoust., Speech, Signal Process. 25~(6) (1977) 476--484.

\bibitem{SteiD77:ieeeicassp}
K.~Steiglitz, B.~Dickinson, Computation of the complex cepstrum by
  factorization of the z-transform, in: Proc. IEEE Int. Conf. Acoust. Speech
  Signal Process., Vol.~2, 1977, pp. 723--726.

\bibitem{PedeAD10:ieeeicassp}
C.~F. Pedersen, O.~Andersen, P.~Dalsgaard, Separation of mixed phase signals by
  zeros of the z-transform-a reformulation of complex cepstrum based separation
  by causality, in: Proc. IEEE Int. Conf. Acoust. Speech Signal Process., 2010,
  pp. 5050--5053.

\bibitem{GholK14:ieeescies}
A.~Gholamy, V.~Kreinovich, Why {Ricker} wavelets are successful in processing
  seismic data: {Towards} a theoretical explanation, in: IEEE Symp. Comput.
  Intell. Eng. Solut., 2014, pp. 11--16.

\bibitem{Solar_data}
\href{soi.stanford.edu}{{The Solar Oscillations Investigation project / The
  Michelson Doppler Imager}}.
\newline\urlprefix\url{soi.stanford.edu}

\end{thebibliography}








\end{document}